\documentclass[copyright,creativecommons]{eptcs}
\usepackage{breakurl}             
\usepackage{framed,amsfonts}
\usepackage{color}
\usepackage{url}
\usepackage{graphicx}
\usepackage{rotating}
\usepackage{amsmath}
\newtheorem{lemma}{Lemma}
\newtheorem{theorem}{Theorem}
\newtheorem{definition}{Definition}

\newcommand{\eop}{\hspace*{\fill}$\Box$}

\newenvironment{proof}{{\it Proof.}\quad}{\eop\vspace*{4mm}}

\newcommand{\height}[1]{{\sf height}(#1)}
\newcommand{\dom}[1]{{\sf dom}(#1)}
\newcommand{\ntrfc}{\textsc{n-t}^{\text{R}}_{\text{fc}}}
\newcommand{\trfc}{\textsc{t}^{\text{R}}_{\text{fc}}}
\newcommand{\tr}{\textsc{t}^{\text{R}}}
\newcommand{\ttt}{\textsc{t}}
\newcommand{\mtt}{\textsc{mtt}}
\newcommand{\mttlsi}{\textsc{mtt}_{\text{lsi}}}
\newcommand{\mso}{\textsc{mso}}
\newcommand{\xml}{\textsc{xml}}
\newcommand{\gsm}{\textsc{gsm}}
\newcommand{\dgsm}{\textsc{2dgsm}}
\newcommand{\io}{\textsc{io}}
\newcommand{\oi}{\textsc{oi}}
\newcommand{\dol}{\textsc{d0l}}
\newcommand{\hdtol}{\textsc{hdt{\small 0}l}}
\newcommand{\regt}{\textsc{regt}}

\newcommand{\dexptime}{\textsc{Exptime}}
\newcommand{\conexptime}{\textsc{co-NExptime}}
\newcommand{\expspace}{\textsc{Expspace}}
\newcommand{\nlogspace}{\textsc{NLogspace}}

\newcommand{\pspace}{\textsc{Pspace}}
\newcommand{\nptime}{\textsc{NPtime}}

\newcommand{\Ptime}{\textsc{Ptime}}
\newcommand{\nat}{I\!\!N}

\newcommand{\mttr}{\textsc{mtt}^{\text{R}}}

\def\rank{{\sf rank}}
\def\expand{{\sf expand}}
\def\strip{{\sf strip}}
\def\rhs{{\sf rhs}}
\def\rank{{\sf rank}}

\title{Equivalence Problems for Tree Transducers:\\A Brief Survey}
\author{Sebastian Maneth
\institute{School of Informatics\\University of Edinburgh}
\email{smaneth@inf.ed.ac.uk}
}

\begin{document}
\maketitle
\begin{abstract}
The decidability of equivalence for three important classes of tree transducers is discussed. 
Each class can be obtained as a natural restriction of deterministic macro tree transducers ($\mtt$s):
(1)~no context parameters, i.e., top-down tree transducers,
(2)~linear size increase, i.e., $\mso$ definable tree transducers, and
(3)~monadic input and output ranked alphabets.
For the full class of $\mtt$s, decidability of equivalence remains a long-standing open problem.
\end{abstract}
\section{Introduction}

The macro tree transducer ($\mtt$) was invented independently by
Engelfriet~\cite{eng80,DBLP:journals/jcss/EngelfrietV85}
and Courcelle~\cite{DBLP:journals/tcs/CourcelleF82,DBLP:journals/tcs/CourcelleF82a} 
(see also~\cite{DBLP:series/eatcs/FulopV98}).
As a model of syntax-directed translations, $\mtt$s generalize the
attribute grammars of Knuth~\cite{DBLP:journals/mst/Knuth68}. 
Note that one (annoying) issue of attribute
grammars is that they can be circular; $\mtt$s always terminate.
Macro tree transducers are a combination of context-free
tree grammars, invented by Rounds~\cite{DBLP:conf/stoc/Rounds69} and 
also known as ``macro tree grammars''~\cite{fis68}, and 
the top-down tree transducer of Rounds and 
Thatcher~\cite{DBLP:journals/mst/Rounds70,DBLP:journals/jcss/Thatcher70}:
the derivation of the grammar is (top-down) controlled by a given
input tree. In terms of a top-down transducer, the combination is
obtained by allowing nesting of state calls in the rules
(similar to the nesting of nonterminals in the productions of context-free tree grammars).
Top-down tree transducers generalize to trees the
finite state (string) transducers (also known as ``generalized sequential machines'', 
or $\gsm$s, 
see~\cite{gin66,Berstel79transductionsand}).
In terms of formal languages, compositions of $\mtt$s
give rise to
 a large hierarchy of string languages containing, e.g., the $\io$ and $\oi$ hierarchies
(at level one they include the indexed languages of Aho~\cite{DBLP:journals/jacm/Aho68}),
see~\cite{DBLP:journals/jcss/EngelfrietM02}.
$\mtt$s can be applied in many scenarios, e.g., to type check
$\xml$ transformations 
(they can simulate the $k$-pebble transducers of Milo, Suciu, and Vianu~\cite{DBLP:journals/jcss/MiloSV03}),
see~\cite{DBLP:journals/acta/EngelfrietM03,DBLP:conf/pods/ManethBPS05,DBLP:conf/icdt/ManethPS07}, or
to efficiently implement streaming XQuery transformations~\cite{icde2014,DBLP:conf/aplas/NakanoM06}.
In terms of functional programs, $\mtt$s are particularly simple
programs that do primitive recursion over an input tree and 
only produce trees as output. Applications in programming 
languages exist 
include~\cite{DBLP:journals/scp/Vogler91,DBLP:phd/de/Voigtlander2005,DBLP:conf/aplas/NakanoM06}.

Equivalence of nondeterministic transducers is undecidable,
already for restricted string transducers~\cite{DBLP:journals/jacm/Griffiths68}. 
We therefore only consider deterministic transducers.
What is known about the equivalence problem for 
deterministic macro tree transducers?
Unfortunately not much in the general case.
Only a few subcases are known to be decidable.
Here we describe three of them:
\begin{enumerate}
\item[(1)] top-down tree transducers
\item[(2)] linear size increase transducers
\item[(3)] monadic tree transducers.
\end{enumerate}
The first one was solved long ago by Esik~\cite{DBLP:journals/actaC/Esik81}, but
was revived through the ``earliest canonical normal form''
by Engelfriet, Maneth, and Seidl~\cite{DBLP:journals/jcss/EngelfrietMS09}. 
The latter implies $\Ptime$
equivalence check for total top-down tree transducers. 
The second one is solved by the decidability of equivalence 
for deterministic $\mso$ tree transducers of Engelfriet and Maneth~\cite{DBLP:journals/ipl/EngelfrietM06}.
The same authors have shown that every $\mtt$ of linear size increase
is effectively equal to an $\mso$ transducer~\cite{DBLP:journals/siamcomp/EngelfrietM03}. Hence,
decidability of equivalence follows for $\mtt$s of linear size increase.
Here we give a direct proof using $\mtt$s.
The third result is about $\mtt$s over monadic trees.
These are string transducers with copying. 
The decidability of their equivalence problem follows
through a relationship with L-systems~\cite{DBLP:journals/jcss/EngelfrietRS80};
in particular, with the sequence equivalence problem of $\hdtol$ systems.
The latter was first proved decidable by Culik~II and Karumh{\"a}ki~\cite{culkar86}.

\section{Preliminaries}

We deal with finite, ordered, ranked trees.
In such a tree, each node is labeled by a symbol from
a ranked alphabet such that the rank of the symbol is equal
to the number of children of the node.
Formally, a \emph{ranked alphabet} consists of a finite set $\Sigma$
together with a mapping $\rank_\Sigma:\Sigma\to\nat$ associating
to each symbol its rank. 
We write $\sigma^{(k)}$ to denote that the rank of $\sigma$ is equal to $k$.
By $\Sigma^{(k)}$ we denote the subset of symbols of $\Sigma$ that have
rank $k$.
Let $\Sigma$ be a ranked alphabet.
The \emph{set of all trees over $\Sigma$},
denoted $T_\Sigma$, is the smallest set of strings $T$ such that if
$k\geq 0$, $t_1,\dots, t_k\in T$, and $\sigma\in\Sigma^{(k)}$, then also
$\sigma(t_1,\dots,t_k)\in T$.
For a tree of the form $a()$ we simply write $a$.
For a set $A$, we denote by $T_\Sigma(A)$ the set of all trees over
$\Sigma\cup A$ such that the rank of each $a\in A$ is zero.
Let $a_1,\dots,a_n$ be distinct symbols in $\Sigma^{(0)}$ and 
let $t_1,\dots,t_n\in T_\Sigma$ such that none of the leaves in $t_j$ are
labeled by $a_i$ for $1\leq i\leq n$.
Then by $[a_i\leftarrow t_i\mid i\in\{1,\dots,n\}]$
we denote the \emph{tree substitution} that replaces each leaf labeled
$a_i$ by the tree $t_i$.
Thus $d(a,b,a)[a\leftarrow c(b), b\leftarrow a]$ denotes the tree
$d(c(b),a,c(b))$.

Let $t\in T_\Sigma$ for some ranked alphabet $\Sigma$.
We denote the \emph{nodes of $t$} by their Dewey dotted decimal path and
define the set $V(t)$ of nodes of $t$ as
$\{\epsilon\}\cup \{i.u\mid 1\leq i\leq k, u\in V(t_i)\}$
if $t=\sigma(t_1,\dots,t_k)$ with $k\geq 0$, $\sigma\in\Sigma^{(k)}$, and
$t_1,\dots, t_k\in T_\Sigma$. Thus, $\epsilon$ denotes the root node,
and $u.i$ denotes the $i$th child of the node $u$.
For a node $u\in V(t)$ we denote by $t[u]$ its label, and,
for a tree $t'$ we denote by $t[u\leftarrow t']$ 
the tree obtained from $t$ by replacing its subtree rooted at $u$
by the tree $t'$.
The \emph{size} of $t$, denoted $|t|$, is its number $|V(t)|$ of nodes.
The \emph{height} of $t$, denoted $\height{t}$, is defined as
$\height{\sigma(t_1,\dots,t_k)}=1+\max\{\height{t_i}\mid 1\leq i\leq k\}$ for
$k\geq 0$, 
$\sigma\in\Sigma^{(k)}$, 
and $t_1,\dots,t_k\in T_\Sigma$.

We fix the set of \emph{input variables} $X=\{x_1,x_2,\dots\}$ and the set
of \emph{formal context parameters} $Y=\{y_1,y_2,\dots\}$. 
For $n\in\nat$ we define $X_n=\{x_1,\dots,x_n\}$ and $Y_n=\{y_1,\dots, y_n\}$.

A \emph{deterministic finite-state bottom-up tree automaton} is a tuple
$A=(Q,\Sigma,\delta,Q_f)$ where $Q$ is a finite set of states, 
$Q_f\subseteq Q$ is the set of final states, and for every $\sigma\in\Sigma^{(k)}$
and $k\geq 0$, $\delta_\sigma$ is a function from $Q^k$ to $Q$.
The transition function $\delta$ is extended to a mapping from $T_\Sigma$ to $Q$ in 
the obvious way, and the set of trees accepted by $A$ is
$L(A)=\{s\in T_\Sigma\mid \delta(s)\in Q_f\}$.

\section{Macro Tree Transducers}

\begin{definition}\rm
A \emph{(deterministic) macro tree transducer} $M$ is a tuple $(Q,\Sigma,\Delta,q_0,R)$
such that $Q$ is a ranked alphabet of states with $Q^{(0)}=\emptyset$, 
$\Sigma$ and $\Delta$ are ranked alphabets of input and output
symbols, respectively, $q_0\in Q^{(1)}$ is the initial state, and
for every $q\in Q^{(m+1)}$, $m\geq 0$, and $\sigma\in\Sigma^{(k)}$, $k\geq 0$, 
the set of rules $R$ contains at most one rule of the form
\[
q(\sigma(x_1,\dots,x_k),y_1,\dots,y_m)\to t
\]
where $t\in T_{\Delta\cup Q}(X_k\cup Y_m)$ such that
if a node in $t$ has its label in $Q$, then its first child has 
its label in $X_k$. A rule as above is called a $(q,\sigma)$-rule and 
its right-hand side is denoted by $\rhs(q,\sigma)$.
\end{definition}

The translation $\tau_M: T_\Sigma\to T_\Delta$ (often just denoted $M$)
realized by an $\mtt$ $M$ is a partial function recursively defined as follows.
For each state $q$ of rank $m+1$, $M_q: T_\Sigma\to T_\Delta(Y_m)$
is the translation of $M$ starting in state $q$, i.e.,
``starting'' with a tree $q(s,y_1,\dots,y_m)$ where $s\in T_\Sigma$.
For instance, for $a\in\Sigma^{(0)}$, $M_q(a)$ simply
equals $\rhs(q,a)$.
In general, for an input tree $s=\sigma(s_1,\dots,s_k)$,
$M_q(s)$ is obtained from $\rhs(q,\sigma)$ by repeatedly replacing
a subtree of the form $q'(x_i,t_1,\dots,t_n)$ 
with $t_1,\dots t_n\in T_\Delta(Y_m)$
by
the tree $M_{q'}(s_i)[y_j\leftarrow t_j\mid 1\leq j\leq n]$.
The latter tree will also be written 
$M_{q'}(s_i,t_1,\dots,t_n)$.
We define $\tau_M=M_{q_0}$.

By the definition above, all our $\mtt$s are deterministic 
and we refer to them as ``macro tree transducers'' and
denote their class of translations by $\mtt$.
An $\mtt$ is \emph{total} if there is exactly one rule of the above form. 
If each state of an $\mtt$ is of rank one, 
then it is a \emph{top-down tree transducer}.
The class of translations realized by (deterministic) 
top-down tree transducers is denoted by $\ttt$.
In Lemmas~\ref{lm:nParikh} and~\ref{lm:ab} we 
make use of nondeterministic top-down tree transducers
and mark the corresponding class there by the letter ``N''.
A transducer is nondeterministic if there are several 
$(q,\sigma)$-rules in $R$. An $\mtt$ is \emph{monadic} if its
input and output ranked alphabets are monadic, i.e., only contain
symbols of rank $1$ and $0$.
An $\mtt$ $M$ is of \emph{linear size increase} if there is a constant $c$ 
such that $|\tau_M(s)|\leq c\cdot |s|$ for every $s\in T_\Sigma$. 

As an example, consider the transducer $M$ consisting of the rules 
in Figure~\ref{fig:mtt}.
\begin{figure}[htb]
\[
\begin{array}{lcl}
q_0(d(x_1,x_2))&\to& d(q(x_1,1(e)), q(x_2,2(e)))\\
q_0(a)&\to& a(e)\\
q(d(x_1,x_2),y_1)&\to& d(q(x_1,1(y_1)), q(x_2,2(y_1)))\\
q(a,y_1)&\to& a(y_1)
\end{array}
\]
\caption{The macro tree transducer $M$ adds reverse Dewey paths to leaves}\label{fig:mtt}
\end{figure}
The alphabets of this transducer are
$Q=\{q_0^{(1)},q^{(2)}\}$,
$\Sigma=\{d^{(2)}, a^{(0)}\}$, and 
$\Delta=\{d^{(2)}, a^{(1)}, 1^{(1)}, 2^{(1)}, e^{(0)}\}$.
The $\mtt$ $M$ translates a binary tree into the same tree, but additionally 
adds under each leaf the reverse (Dewey) path of the node.
For instance, $s=d(d(a,a),a)$ is translated into 
$d(d(a(1(1(e))),a(2(1(e)))),a(2(e)))$
as can be verified in this computation of $M$
\[
\begin{array}{l}
M(d(d(a,a),a)) = \\
d(M_q(d(a,a), 1(e)), M_q(a, 2(e))) =\\
d(d(M_q(a, 1(1(e))), M_q(a, 2(1(e)))), a(2(e))) =\\
d(d(a(1(1(e))), a(2(1(e)))), a(2(e))).
\end{array}
\]
Note that $M$ is \emph{not} of linear size increase. 
Hence, it is not MSO definable.
Since the translation is neither top-down nor monadic, it falls into
a class of $\mtt$s for which we do not know a procedure to decide
equivalence.
As an exercise, the reader may wonder whether there is an $\mtt$ that
is similar to $M$, but outputs Dewey paths below leaves (instead of their \emph{reverses}).
In contrast,
consider input trees with only exactly one $a$-leaf (and all other leaves labeled differently).
Then an $\mtt$ of linear size increase can output under the unique $a$-leaf its 
reverse Dewey path, i.e., this translation is $\mso$ definable. Is it now possible
with an $\mtt$ to output the non-reversed Dewey path?

One of the most useful properties of $\mtt$s is the effective preservation
of regular tree languages by their inverses.
This is used inside the proofs of several results presented here.
For instance, to prove that for every $\mtt$ there is an equivalent one
which is nondeleting in the parameters (= strict), one can use the above
property as follows: given a state $q$ of rank $m+1$ and a subset
$A \subseteq Y_m$, the language $T_\Delta(A)$ is regular, and hence
$M_q^{-1}(T_\Delta(A))$ is the regular set of inputs for which $q$ outputs
only parameters in the set $A$. Thus, by using regular look-ahead
we can determine which parameters are used, and can change the rules 
to call an appropriate state $q'$ which is only provided the parameters
in $A$ which it uses. Regular look-ahead is explained in Section~\ref{sect:td}.

The following result is stated in Theorem~7.4 of~\cite{DBLP:journals/jcss/EngelfrietV85}.
A similar proof as below is given at the end of~\cite{DBLP:journals/acta/EngelfrietM03}.
A slightly simpler proof, for a slightly larger class, is presented
by Perst and Seidl for macro forest transducers~\cite{DBLP:journals/ipl/PerstS04}.

\begin{lemma}\rm\label{lm:inv}
Let $M$ be an $\mtt$ with output alphabet $\Delta$ and let
$R\subseteq T_\Delta$ be a regular tree language (given by a
bottom-up tree automaton $B$). Then $\tau_M^{-1}(R)$ is effectively regular.
In particular, the domain $\dom{\tau_M}$ is effectively regular.
\end{lemma}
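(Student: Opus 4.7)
The plan is to build, via a product-like construction, a deterministic bottom-up tree automaton $A$ on $T_\Sigma$ whose state at an input subtree $s$ summarises, for every state $q\in Q$ of rank $m+1$, how $B=(P,\Delta,\delta_B,P_f)$ would evaluate $M_q(s,y_1,\dots,y_m)$ as a function of the $B$-states reached on the trees eventually substituted for the parameters. Concretely, the $A$-state at $s$ is the tuple $(f_q^s)_{q\in Q}$, where $f_q^s:P^m\to P\cup\{\bot\}$ is defined by $f_q^s(p_1,\dots,p_m)=\delta_B(M_q(s,t_1,\dots,t_m))$ for any $t_1,\dots,t_m\in T_\Delta$ with $\delta_B(t_j)=p_j$, taking the value $\bot$ when $M_q$ is undefined on $(s,t_1,\dots,t_m)$. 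Since $P$ and $Q$ are finite and the ranks are bounded, this state set is finite.

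The transitions of $A$ are then computed symbolically from the rules of $M$. For $\sigma\in\Sigma^{(k)}$, given $A$-states $(f_q^{(i)})_{q\in Q,\,1\le i\le k}$ at the children of $s=\sigma(s_1,\dots,s_k)$, I evaluate each right-hand side $\rhs(q,\sigma)$ recursively, treating the $y_j$ as formal inputs: a leaf $y_j$ stands for the $j$th input $p_j$; a $\Delta$-node $\delta$ whose children evaluate to $r_1,\dots,r_\ell$ evaluates to $\delta_{B,\delta}(r_1,\dots,r_\ell)$; and a state call $q'(x_i,\tau_1,\dots,\tau_n)$ evaluates to $f_{q'}^{(i)}(v_1,\dots,v_n)$, where $v_j$ is the value already computed for $\tau_j$. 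The definition of $\mtt$ forces every $Q$-node to have some $x_i$ as first child, so this recursion only ever consults the $A$-states of the children of $s$. Reading off the resulting function as $p_1,\dots,p_m$ vary yields $f_q^s$; if there is no $(q,\sigma)$-rule, or any consulted sub-value is $\bot$, set the result to $\bot$. Declare $(f_q)_q$ accepting iff $f_{q_0}()\in P_f$.

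A routine structural induction on $s$ then confirms that $f_q^s(p_1,\dots,p_m)=\delta_B(M_q(s,t_1,\dots,t_m))$ for all $t_j$ with $\delta_B(t_j)=p_j$, so $A$ accepts $s$ iff $\tau_M(s)\in R$. For the last sentence, set $R=T_\Delta$ and alter acceptance to $f_{q_0}()\neq\bot$, which is exactly $\dom{\tau_M}$. The main subtlety, and essentially the only real content beyond bookkeeping, is the factorisation claim that $\delta_B(M_q(s,t_1,\dots,t_m))$ depends only on the states $\delta_B(t_j)$ rather than on the trees $t_j$ themselves; this bottom-up compositionality of $B$, proved by a short induction on $M_q(s)$ viewed as an element of $T_\Delta(Y_m)$, is precisely what allows a finite-state device on the input side to simulate an $\mtt$'s output being processed by a bottom-up tree automaton even in the presence of context parameters.
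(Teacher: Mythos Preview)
Your construction is essentially the same as the paper's: both build a bottom-up automaton on $T_\Sigma$ whose state at $s$ is the tuple $(f_q^s)_{q\in Q}$ with $f_q^s:P^m\to P$ recording how $B$ evaluates $M_q(s,\dots)$ as a function of the $B$-states of the parameter trees, and both compute the transition at $\sigma$ by evaluating $\rhs(q,\sigma)$ symbolically using the children's tuples. Your version is slightly more explicit about partiality (the $\bot$ value) and about the compositionality justification, but the approach is identical.
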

\begin{proof}
Let $M=(Q,\Sigma,\Delta,q_0,R)$ and $B=(P,\Delta,\delta,P_f)$.
We construct the new automaton $A=(S,\Sigma,\delta',S_f)$.
The states of $A$ are mappings $\alpha$ that associate with 
each state $q\in Q^{(m+1)}$ a mapping $\alpha(q): P^m\to P$. 
The set $S_f$ consists of all $\alpha\in S$ such that $\alpha(q_0)\in P_f$.
For $a\in\Sigma^{(0)}$ we define
$\delta'_a()=\alpha$ such that 
for every $q\in Q^{(m+1)}$ and $p_1,\dots,p_m$, 
$(\alpha(q))(p_1,\dots,p_m) = \delta^*(\rhs(q,a)[y_j\leftarrow p_j\mid j\in[m]])$.
Here, $\delta^*$ is the extension of $\delta$ to trees in $T_\Delta(P)$
by the rule $\delta(p)=p$ for all $p\in P$.
Now let $b\in\Sigma^{(k)}$ with $k\geq 1$.
For $\alpha_1,\dots,\alpha_k\in S$ we define
$\delta'_b(\alpha_1,\dots,\alpha_k)=\alpha$ where
for every $q\in Q^{(m+1)}$ and $p_1,\dots,p_m$, 
$(\alpha(q))(p_1,\dots,p_m) = \underline{\delta}^*(\rhs(q,b)[y_j\leftarrow p_j\mid j\in[m]])$.
Now, $\underline{\delta}^*$ is the extension of $\delta^*$ of above 
to symbols $q'\in Q^{(n+1)}$ by 
$\delta(q'(x_i,p_1',\dots,p_n'))=(\alpha_i(q'))(p_1',\dots,p_n')$ for
every $p_1',\dots,p_n'\in P$.
\end{proof}

\subsection{Bounded Balance}

Many algorithms for deciding equivalence of transducers are based
on the notion of bounded balance. 
Intuitively, two transducers have bounded balance if 
the difference of their outputs on any ``partial''
input is bounded by a constant.
For instance, for two $\dol$ systems $G_i=(\Sigma,h_i,\sigma)$ with $i=1,2$,
Culik~II defines~\cite{DBLP:journals/tcs/Culik76} the balance of a string $w\in\Sigma^*$ as the
difference of the lengths of $h_1(w)$ and $h_2(w)$. He shows that
equivalence is decidable for $\dol$ systems that have bounded difference.
In a subsequent article~\cite{DBLP:journals/iandc/CulikF77}, 
Culik~II and Fris show that any two equivalent $\dol$ systems
in normal form have bounded difference, thus giving the first solution to the
famous $\dol$ equivalence problem.

For a tree transducer $M$ a partial input is an input tree which contains
exactly on distinguished leaf labeled $x$, and $x$ is a fresh symbol
not in $\Sigma$. More precisely, a partial input is a tree
$p=s[u\leftarrow x]$ where $s$ is in the domain of $M$ and $u$ is a node of $s$.
Since the transducer has no rules for $x$, the computation on the
input $p$ ``blocks'' at the $x$-labeled node. Thus, the tree $M(p)$
may contain subtrees of the form $q(x,t_1,\dots,t_m)$ where $q$ is a
state of rank $m+1$. 
For instance, consider the transducer $M$ shown in the left of
Figure~\ref{fig:MN} and consider the partial input tree
$s=a(a(x))$. Then
\[
M(a(a(x))) = d(M(a(x)), M(a(x))) = d(d(q_0(x),q_0(x)), d(q_0(x),q_0(x))).
\]
It should be clear that if we replace each $q_0(x)$ by $M_{q_0}(s')$ so
that $s'\in T_\Sigma$ is a tree with $\hat{s}=s[x\leftarrow s']\in\dom{M}$,
then we obtain the output $M(\hat{s})$ of $M$ on input tree $\hat{s}$.
For instance, we may pick $s'=e$ above; then we obtain
$d(d(e,e),d(e,e))$ which indeed equals $M(a(a(e)))$.

Consider the trees $M_1(p)$ and $M_2(p)$ for two $\mtt$s $M_1$ and $M_2$.
What is the balance of these two trees?
There are two natural notions of balance: either we compare the 
sizes of $M_i(p)$, or we compare their heights.
Given two trees $t_1,t_2$ we define their \emph{size-balance} 
(for short, \emph{s-balance}) as $| |t_1| - |t_2| |$ and 
their \emph{height-balance} (for short, \emph{h-balance}) as
$| \height{s_1} - \height{s_2} |$.
Two transducers $M_1,M_2$ have \emph{bounded s-balance} (resp. h-balance)
if there exists a $c>0$ such that for any partial input $p$
the s-balance (resp. h-balance) of $M_1(p)$ and $M_2(p)$ is 
at most $c$.
Obviously, bounded h-balance implies bounded s-balance, but not vice versa.

Let $M$ and $N$ be equivalent $\mtt$s.
Do $M$ and $N$ have bounded size-balance?
To see that this is in general \emph{not} the case, it suffices to consider
two simple top-down tree transducers:
$M$ translates a monadic partial input of the form $a^n(x)$
into the full binary tree of height $n$ containing $2^n$
occurrences of the subtree $q_0(x)$. The transducer $N$ translates 
$a^n(x)$ into the full binary tree of height $n-1$ with
$2^{n-1}$ occurrences of the subtree $p(x)$.
\begin{figure}[htb]
\[
\begin{array}{llclllcl}
M:&q_0(a(x_1))&\to& d(q_0(x_1),q_0(x_1))\qquad &N:& p_0(a(x_1))&\to& p(x_1)\\
&q_0(e)&\to&e&& p_0(e)&\to&e\\
&&&&&p(a(x_1))&\to&d(p(x_1),p(x_1))\\
&&&&&p(e)&\to&d(e,e)
\end{array}
\]
\caption{Equivalent top-down tree transducers $M$ and $N$
with unbounded size-balance}\label{fig:MN}
\end{figure}
The rules of $M$ and $N$ are shown in Figure~\ref{fig:MN}.
Clearly, $M$ and $N$ are of bounded height-balance (with constant $c=1$).
But, their size-balance is not bounded.

In a similar way it can be seen that equivalent $\mtt$s need
not have bounded height-balance.
This even holds for monadic $\mtt$s:
Consider the transducers $M'$ and $N'$ with rules shown
\begin{figure}[htb]
\[
\begin{array}{llclllcl}
M':&q_0(a(x_1))&\to& q(x_1,q_0(x_1))
  &N':& p_0(a(x_1))&\to& p(x_1,p(x_1,p_0(x_1)))\\
&q_0(e)&\to&e&& p_0(e)&\to&e\\
&q(a(x_1),y_1)&\to&q(x_1,y_1)&&p(a(x_1),y_1)&\to&p(x_1,y_1)\\
&q(e,y_1)&\to&a(a(y_1))&&p(e,y_1)&\to&a(y_1)
\end{array}
\]
\caption{Equivalent monadic macro tree transducers $M'$ and $N'$
with unbounded height-balance}\label{fig:mon}
\end{figure}
in Figure~\ref{fig:mon}.
Their height-balance on input $a^n(x)$ is equal to $n$; for instance,
for the tree $a(a(x))$ we obtain
\[
\begin{array}{ll}
M'(a(a(x)))= & N'(a(a(x)))=\\
M'_q(a(x), M'(a(x)))=\qquad\qquad & N'_p(a(x), N'_p(a(x), N'(a(x))))=\\
q(x, q(x, q_0(x))) &  p(x, p(x, p(x, p(x, p_0(x)))))
\end{array}
\]
which are trees of height $3$ and $5$, respectively.
We may verify that replacing $x$ by $e$ results in equal trees:
\[
M'_q(e, M'_q(e, M(e)))=M'_q(e, M'_q(e,e))= 
M'_q(e, a(a(e)))= a(a(a(a(e)))).
\]
And for $N'$ we obtain that
\begin{multline*}
N'_p(e, N'_p(e, N'_p(e, N'_p(e, N(e)))))=
N'_p(e, N'_p(e, N'_p(e, N'_p(e,e))))=
N'_p(e, N'_p(e, N'_p(e, a(e))))=\\
N'_p(e, N'_p(e, a(a(e))))=
N'_p(e, a(a(a(e))))=
a(a(a(a(e)))).
\end{multline*}

Let us now show that equivalent top-down tree transducers 
have bounded height-balance.

\begin{lemma}\rm\label{lm:bd}
Equivalent top-down tree transducers
effectively have bounded height-balance.
\end{lemma}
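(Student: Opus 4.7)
The plan is to exploit Lemma~\ref{lm:inv}, which guarantees that the common domain $\dom{M_1}=\dom{M_2}$ of two equivalent top-down tree transducers is effectively regular. Let $A$ be a bottom-up tree automaton recognising this common domain, and let $K$ be the maximum height of any right-hand side occurring in $M_1$ or $M_2$.

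My first step would be to show that every partial input $p=s[u\leftarrow x]$ admits a completion by some tree $s'$ of size bounded by $|A|$. Indeed, the set of valid completions $\{t\in T_\Sigma\mid s[u\leftarrow t]\in\dom{M_1}\}$ is the union, over states $\alpha$ compatible with the context of $u$ in $s$, of the regular languages $L_\alpha=\{t\mid A(t)=\alpha\}$. By the definition of partial input this set is non-empty (it contains the original subtree of $s$ at $u$), so a standard pumping argument for regular tree languages produces a completion of size at most $|A|$. Since a top-down transducer's output height grows by at most $K$ per input level, this gives $\height{M_{i,q}(s')}\leq K\cdot|A|=:H$ for every state $q$ of either transducer.

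The second step uses the top-down structure crucially: $M_i(s[u\leftarrow s'])$ is obtained from $M_i(p)$ by substituting each leaf of the form $q(x)$ by the tree $M_{i,q}(s')$. Such leaf substitution can lengthen any root-to-leaf path by at most $H-1$ and can never shorten one, so
\[
\height{M_i(p)}\;\leq\;\height{M_i(s[u\leftarrow s'])}\;\leq\;\height{M_i(p)}+H-1.
\]
Because $M_1\equiv M_2$, the middle term coincides for $i=1$ and $i=2$, and subtracting the two sandwich inequalities yields $|\height{M_1(p)}-\height{M_2(p)}|\leq H-1$. All of $K$, $|A|$, and hence $H$ are computable from the given transducers, so the bound is effective.

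I expect no deep obstacle beyond being careful with the substitution-height bookkeeping: the genuinely new feature compared with the size-balance case (cf.\ Figure~\ref{fig:MN}) is that, although the number of $q(x)$ leaves in $M_i(p)$ may be exponential in the depth of $p$, only their maximum depth matters for the height calculation, and that maximum depth is what gets controlled by passing to a short completion $s'$.
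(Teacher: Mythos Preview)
Your argument is correct and follows essentially the same route as the paper: bound the height of a small completion $s'$ (you via pumping on the domain automaton, the paper via intersecting the per-state domains $\dom{M_{i,q}}$), bound $\height{M_{i,q}(s')}$ by (max rule height)$\times$(height of $s'$), and then sandwich $\height{M_i(p)}$ between $\height{M_i(s[u\leftarrow s'])}$ and $\height{M_i(s[u\leftarrow s'])}-H+1$ to conclude. The only slip is cosmetic: pumping on a tree automaton with $|A|$ states gives a tree of \emph{height} at most $|A|$, not size, but since your subsequent bound $K\cdot|A|$ uses only the height this does not affect the argument.
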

\begin{proof}
Let $M_1, M_2$ be equivalent top-down tree transducers 
with sets of states $Q_1, Q_2$, respectively.
Note that they have the same domain $D$.
Let $s\in D$ and $u\in V(s)$.
Consider the two trees 
$\xi_i=M_i(s[u\leftarrow x])$ for $i=1,2$.
Let $s'$ be a smallest input tree such that 
$s[u\leftarrow s']\in D$.
It should be clear that the height of $s'$ is bounded
by some constant $d$. 
In fact, let $d$ be the height of a smallest tree in the
set $(\cap_{q\in Q}\dom{M_{1,q}}\cap(\cap_{q'\in Q'}\dom{M_{2,q'}})$,
for any subsets $Q\subseteq Q_1$ and $Q'\subseteq Q_2$. 
Since $s'$ is in such a set, its height is at most $d$.
This bound $d$ can be computed because by Lemma~\ref{lm:inv}
the sets $\dom{M_{1,q}}$ and $\dom{M_{2,q'}}$
are effectively regular, and regular
tree languages are effectively closed under intersection~\cite{tata07}.
In fact, it is not difficult to see that we can choose
$d=2^{|Q_1|+|Q_2|}$.
Hence, there is a constant $c$ such that 
$\height{M_{i,q_i}(s')}<c$ for any $q_i\in Q_i$
appearing in $\xi_i$. 
Clearly we can take $c=d\cdot h$, where $h$ is the maximal height
of the right-hand side of any rule of $M_1$ and $M_2$.
This means that $|\height{\xi_1}-\height{\xi_2}|\leq c$ because
$\xi_1\Theta_1=\xi_2\Theta_2$ and the substitutions
$\Theta_i=[q(x)\leftarrow M_{i,q}(s')\mid q\in Q_i]$
increase the height of $\xi_i$ by at most $c$.
\end{proof}

If the transducers $M_1, M_2$ of Lemma~\ref{lm:bd} are
total, then $d=1$ and $c$ is the maximal size of the right-hand
side of any rule for an input leaf symbol, i.e., 
$c=\max\{\height{\rhs(M_i,q_i,a)}\mid i\in\{1,2\}, q_i\in Q_i, a\in\Sigma^{(0)}\}$.

Let us consider an example of two equivalent 
top-down tree transducers $M$ and $N$ with
\begin{figure}[htb]
\[
\begin{array}{llclllcl}
M:&q_0(a(x_1))&\to& d(q(x_1),q_0(x_1))\qquad &N:& p_0(a(x_1))&\to& p(x_1)\\
&q_0(e)&\to&e&& p_0(e)&\to&e\\
&q(a(x_1))&\to&q'(x_1)&&p(a(x_1))&\to&d(a(p'(x_1)),p(x_1))\\
&q(e)&\to&e&&p(e)&\to&d(e,e)\\
&q'(a(x_1))&\to&a(a(q(x_1)))&&p'(a(x_1))&\to&a(p'(x_1))\\
&q'(e)&\to&a(e)&&p'(e)&\to&e
\end{array}
\]
\caption{Equivalent top-down tree transducers $M$ and $N$}\label{fig:diff}
\end{figure}
output paths of different height.
The rules of $M$ and $N$ are given in Figure~\ref{fig:diff}.
Let us consider the input tree $s=aaaa(x)$. We omit some
parentheses in monadic input trees.
We obtain
\begin{multline*}
M(s)=d(M_q(aaax),M(aaax))=
d(M_{q'}(aax),d(M_q(aax),M(aax)))=\\
d(a(a(M_q(ax))),d(M_{q'}(ax),d(M_q(ax),M(ax))))=\\
d(a(a(q'(x))),d(a(a(q(x))),d(q'(x),d(q(x),q_0(x)))))
\end{multline*}
Similarly, for the transducer $N$ we obtain
\begin{multline*}
N(s)=N_p(aaax)=d(a(N_{p'}(aax)), N_p(aax))=
d(a(a(N_{p'}(ax))), d(a(N_{p'}(ax)),N_p(ax)))=\\
d(a(a(a(p'(x)))), d(a(a(p'(x))),d(a(p'(x)),p(x)))).
\end{multline*}
As the reader may verify, if $x$ is replaced by the leaf $e$, then
indeed the output trees $M(aaaae)$ and $N(aaaae)$ are the same, i.e.,
the transducers are equivalent.
Let us compare the trees $M(aaaax)$ and $N(aaaax)$.
On the one hand, the transducer $M$ is ``ahead'' of the transducer
$N$ in the output branch $2.2.2$. It has already produced a $d$-node at
that position, while $N$ has not (and is in state $p$ at that position).
On the other hand, $N$ is ahead of $M$ at two other positions in the
output: at the node $1.1.1$ the transducer $N$ has produced an $a$-node
already, while $M$ at that node is in state $q'$, and,
at node $2.2.1$ the transducer $M$ has output an $a$-node, while
also here $M$ is in state $q'$.

\section{Decidable Equivalence Problems}\label{sect:equiv}

\subsection{Top-Down Tree Transducers}\label{sect:td}

It was shown by {\'E}sik~\cite{DBLP:journals/actaC/Esik81} 
that the bounded height-difference 
of top-down tree transducers can be used to decide equivalence. 

\begin{theorem}\rm(\cite{DBLP:journals/actaC/Esik81})\label{theo:Esik}
Equivalence of top-down tree transducers is decidable.
\end{theorem}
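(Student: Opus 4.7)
The plan is to reduce equivalence to an emptiness question for a finite bottom-up tree automaton that simulates $M_1$ and $M_2$ in lockstep on a common input, exploiting the bounded height-balance guaranteed by Lemma~\ref{lm:bd}. Before the reduction, I would use Lemma~\ref{lm:inv} to test whether $\dom{\tau_{M_1}}=\dom{\tau_{M_2}}$: these sets are effectively regular and so equality is decidable, and a mismatch already certifies inequivalence. Henceforth I may assume both transducers share the same domain $D$.

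From the proof of Lemma~\ref{lm:bd} I extract a concrete constant $c$, depending only on $|Q_1|+|Q_2|$ and the maximal right-hand side height, such that $M_1 \equiv M_2$ implies the $h$-balance of $M_1(p),M_2(p)$ is at most $c$ for every partial input $p$. I then build a deterministic bottom-up tree automaton $A$ on $T_\Sigma$ whose state at a subtree $s$ records, for every pair $(q_1,q_2)\in Q_1\times Q_2$ that might be invoked at $s$, an abstract \emph{discrepancy} between $M_{1,q_1}(s)$ and $M_{2,q_2}(s)$: the maximal common output prefix has been stripped, and what remains is a pair of residual trees of height at most $c$ over $\Delta$ whose leaves are holes labeled by pending pairs $(q_1',q_2')$ anchored at common input subtrees, plus a distinguished \emph{fail} value. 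Because the residual height is bounded by $c$ and $\Delta$ is finite, the set of possible discrepancies is finite, so $A$ has finitely many states. The transition $\delta'_\sigma$ expands pending pairs one step using $\rhs(q_1,\sigma)$ and $\rhs(q_2,\sigma)$, peels off the largest common output prefix, and shifts the remaining state calls into the corresponding child positions; if the residual height strictly exceeds $c$, the entry collapses to \emph{fail}.

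Then $M_1\equiv M_2$ iff the product of $A$ with a deterministic bottom-up automaton for $D$ never reaches a state whose $(q_{0,1},q_{0,2})$-entry is either \emph{fail} or a non-trivial terminal discrepancy. This is a decidable emptiness problem on a finite bottom-up tree automaton. The main obstacle lies in defining the discrepancy representation and the transition of $A$ faithfully: because top-down tree transducers may be non-linear, a single pending pair $(q_1,q_2)$ at an input subtree $s$ can split, after one expansion step, into several pending pairs attached to different copies of the same child $s_i$, and the outputs of $M_1$ and $M_2$ must be matched consistently across all such copies while the residual is kept within height $c$. Making this matching canonical (so that two input subtrees inducing the same discrepancy are truly interchangeable in every enclosing context) and verifying that the finitely many discrepancy descriptions are closed under the transition is the technically delicate heart of the argument; once it is set up, the \emph{fail} state cleanly records precisely the situation where, by the contrapositive of Lemma~\ref{lm:bd}, inequivalence is already witnessed.
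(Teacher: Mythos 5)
Your proposal follows essentially the same route as the paper: first decide domain equality via Lemma~\ref{lm:inv}, then use the constant $c$ from Lemma~\ref{lm:bd} to build a finite tree automaton that simulates $M_1$ and $M_2$ in lockstep, tracking the bounded ``difference trees'' and rejecting on an output mismatch or a residual that exceeds height $c$, and finally reduce equivalence to a decidable automaton check against the regular domain $D$. The only cosmetic difference is that you declare the automaton bottom-up while the paper (following Engelfriet) runs it top-down --- indeed your own transition description (``shifts the remaining state calls into the corresponding child positions'') is the top-down expansion step --- and, like the paper, you flag rather than fully work out the bookkeeping needed to match outputs consistently across the multiple copies of a child created by non-linear rules.
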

\begin{proof}
We follow the version of the proof given by Engelfriet~\cite{eng80}.
\begin{figure}[htb]
\centerline{\input 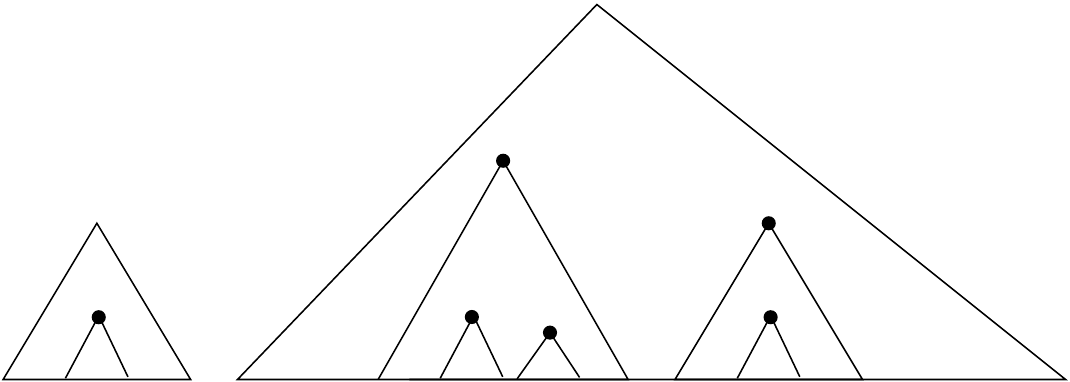_t}
\caption{Two equivalent top-down tree transducers}\label{fig:joost}
\end{figure}
Consider two equivalent top-down tree transducers $M_1$ and $M_2$.
By Lemma~\ref{lm:bd} they have bounded height-balance by some constant $c$.
Consider the trees $M_1(s[u\leftarrow x])$ and
$M_2(s[u\leftarrow x])$. An ``overlay'' of these two trees is
shown in Figure~\ref{fig:joost} (this is a copy of Figure~10 of~\cite{eng80}).
At the node where $M_2$ is in state $p_1$, the transducer $M_1$ has
already produced the tree $t$, i.e., at this node 
$M_1$ is ``ahead'' of $M_2$ by the amount $t$.
Similarly, at the $q_3$-labeled node, $M_2$ is ahead of $M_1$
by the amount $t'$.
Clearly, the height of $t$ and $t'$ is bounded by $c$. 
Hence, there are only finitely many such trees $t$ and $t'$.
We can construct a top-down tree automaton $A$ which in its states
keeps track of all such ``difference trees'' $t$ and $t'$, while
simulating the runs of $M_1$ and $M_2$. It checks if the outputs are
consistent, and rejects if either the outputs are different or if 
the height of a difference tree is too large.
Finally, we check if $A$ accepts the language $D=\dom{M_1}=\dom{M_2}$; this is
decidable because $D$ is regular by Lemma~\ref{lm:inv}, and equivalence
of regular tree languages is decidable (see~\cite{tata07}).
\end{proof}

Note that {\'E}sik~\cite{DBLP:journals/actaC/Esik81} shows that
even for single-valued (i.e., functional) nondeterministic top-down
tree transducers, equivalence is decidable. 
It is open whether or not equivalence is decidable for
$k$-valued nondeterministic top-down tree transducers
(but believed to be decidable along the same lines as for bottom-up
tree transducers~\cite{sei2014}, cf. the text below Theorem~\ref{theo:bu}).
A top-down tree transducer is letter-to-letter if the right-hand
side of each rule contains exactly one output symbol in $\Delta$.
It was shown by Andre and Bossut~\cite{DBLP:journals/tcs/AndreB98}
that equivalence is decidable for nondeleting \emph{nondeterministic} letter-to-letter
top-down tree transducers.
An interesting generalization of Theorem~\ref{theo:Esik} is given
by Courcelle and Franchi-Zannettacci~\cite{DBLP:journals/iandc/CourcelleF82}.
They show that equivalence is decidable for ``separated'' attribute grammars which are 
evaluated in two
independent phases: first a phase that computes all inherited
attributes, followed by a phase that computes all synthesized attributes
(top-down tree transducers are the special case of synthesized
attributes only).

\bigskip

{\bf Top-down Tree Transducers with Regular Look-Ahead.}\quad
Regular look-ahead means that the transducer ($\mtt$ or $\ttt$) comes with
a (complete) deterministic bottom-up automaton (without final
states), called the ``look-ahead
automaton of $M$''. A rule of the look-ahead transducer is of the form
\[
q(\sigma(x_1,\dots,x_k),\dots)\to t\quad\langle p_1,\dots,p_k\rangle
\]
and is applicable to an input tree $\sigma(s_1,\dots,s_k)$ only if 
the look-ahead automaton recognizes $s_i$ in state $p_i$ for all $1\leq i\leq k$.   
Given two top-down tree
transducers with regular look-ahead $M_1, M_2$, we can transform them into
ordinary transducers (without look-ahead) $N_1, N_2$ such that the resulting
transducers are equivalent if and only if the original ones are.
This is done by changing the input alphabet so that 
for every original input symbol $\sigma\in\Sigma$ of rank $k$, 
it now contains the symbols 
$\langle\sigma, p_1,\dots, p_k, q_1,\dots, q_k\rangle$
for all possible look-ahead states $p_i$ of $M_1$ and $q_i$ of $M_2$.
Thus, for every $\sigma\in\Sigma^{(k)}$, the 
new input alphabet has $|P|^k|Q|^k$-many symbols.
It is easy to see that Lemma~\ref{lm:bd} also holds 
for transducers with look-ahead, by additionally requiring
that the domain $D$ of the $N_i$ is intersected with all input trees that represent
correct runs of the look-ahead automata. 
Thus, Theorem~\ref{theo:Esik} also holds for top-down tree transducers with look-ahead. 


\begin{theorem}\rm\label{theo:dtr}
Equivalence of top-down tree transducers with
regular look-ahead is decidable.
\end{theorem}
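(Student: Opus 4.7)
The plan is to reduce equivalence of top-down tree transducers with regular look-ahead to equivalence of ordinary top-down tree transducers and then invoke Theorem \ref{theo:Esik}. Following the annotation construction sketched in the paragraph preceding the theorem, I build a new ranked alphabet $\Sigma'$ whose symbols are $\langle\sigma,p_1,\dots,p_k,q_1,\dots,q_k\rangle$ of rank $k$, ranging over $\sigma\in\Sigma^{(k)}$ and tuples of look-ahead states of the two underlying look-ahead automata of $M_1$ and $M_2$. From $M_i$ one obtains an ordinary top-down tree transducer $N_i$ over $\Sigma'$ by taking, for each state and each annotated symbol, the unique rule of $M_i$ whose look-ahead guard matches the annotation (and then dropping the guard). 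The set $R\subseteq T_{\Sigma'}$ of ``correctly annotated'' trees is regular, recognized by a bottom-up automaton that runs the two look-ahead automata in parallel while checking local consistency of the annotations against the states assigned to child subtrees.

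Stripping annotations is a bijection $R\to T_\Sigma$ along which $N_i$ agrees with $M_i$, so $M_1\equiv M_2$ if and only if $N_1$ and $N_2$ agree on $R\cap\dom{N_1}=R\cap\dom{N_2}$. I then adapt Lemma \ref{lm:bd} to this restricted setting: by Lemma \ref{lm:inv} the state-indexed domains $\dom{N_{i,q}}$ are effectively regular, and intersecting with the regular set $R$ preserves regularity, so for any subsets of states the corresponding intersection is nonempty and contains a tree of bounded height. Consequently a smallest completion $s'\in R$ of any partial input has height bounded by a uniform constant, and bounded height-balance of $N_1$ and $N_2$ on $R\cap\dom{N_1}$ follows verbatim from the argument of Lemma \ref{lm:bd}.

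Finally, I replay the proof of Theorem \ref{theo:Esik}: construct a top-down tree automaton $A$ that simulates $N_1$ and $N_2$ in parallel while tracking the finitely many possible ``difference trees'' of height at most the balance constant $c$ in its state, rejecting on a mismatch or an oversized difference, and accepting otherwise. Equivalence then amounts to deciding whether $A$ accepts the regular language $R\cap\dom{N_1}$, which is decidable by closure of regular tree languages under intersection and decidability of their inclusion. The only genuinely new work is verifying that the bound in Lemma \ref{lm:bd} survives the restriction to $R$; once this is done, the rest of the argument is a direct transcription of the look-ahead-free proof.
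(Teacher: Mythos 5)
Your proposal is correct and follows essentially the same route as the paper: annotate the input alphabet with tuples of look-ahead states, convert $M_1,M_2$ into ordinary top-down tree transducers $N_1,N_2$, observe that Lemma~\ref{lm:bd} still applies once the domain is intersected with the regular set of correctly annotated trees, and then rerun the difference-tree automaton argument of Theorem~\ref{theo:Esik}. You merely spell out the details (the bijection between annotated and plain trees, and checking equivalence on $R\cap\dom{N_1}$) that the paper leaves implicit.
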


\medskip

{\bf Canonical Normal Form.}\quad
Consider two equivalent top-down tree transducers $M_1, M_2$
and let $D$ be their domain. 
As the example transducers $M$ and $N$ with rules
in Figure~\ref{fig:diff} show, 
for a partial input tree $s[u\leftarrow x]$, 
there may be positions in the output trees
where $M_1$ is ahead of $M_2$, and other positions
where $M_2$ is ahead of $M_1$. 
Such a scenario is also depicted in Figure~\ref{fig:joost}.

We say that $M_1$ is \emph{earlier} than $M_2$,
if for every $s\in D$ and $u\in V(s)$, 
the tree $M_2(s[u\leftarrow x])$ is a prefix of the tree $M_1(s[u\leftarrow x])$.
A tree $t$ is a prefix of a tree $t'$ if for every $u\in V(t)$
with $t[u]\in\Delta$ it holds that $t'[u]=t[u]$. 
The question arises whether for every top-down translation there is an
equivalent unique earliest transducer $M$ such that
$M_q\not= M_{q'}$ for $q\not= q'$; we call such a transducer
a \emph{canonical transducer}.
The question was answered affirmative by
Engelfriet, Maneth, and Seidl~\cite{DBLP:journals/jcss/EngelfrietMS09}.
We only state this result for total transducers.

\begin{theorem}\rm\label{theo:EMS}
Let $M$ be a total top-down tree transducer. An equivalent canonical 
transducer can be constructed in polynomial time. 
\end{theorem}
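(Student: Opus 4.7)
The plan is to proceed in two stages: first transform $M$ into an equivalent \emph{earliest} top-down tree transducer $M'$, and then minimize $M'$ by collapsing any pair of states computing the same function.

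For the earliest normalization, for each state $q$ of $M$, since $M$ is total the function $M_q$ is defined on all of $T_\Sigma$. Let $c_q \in T_\Delta(Z_q)$ denote the greatest common tree prefix of all outputs in the image of $M_q$, where the ``holes'' $Z_q$ mark positions where outputs diverge. I would compute the $c_q$ by a fixpoint iteration: initialize $c_q^{(0)}$ to a single fresh hole, and refine by
\[
c_q^{(i+1)} \;=\; \gcd\bigl\{\,\rhs(q,\sigma)[q'(x_j)\leftarrow c_{q'}^{(i)}] \;:\; \sigma\in\Sigma\,\bigr\},
\]
where $\gcd$ denotes the largest common tree prefix. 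Once this stabilizes, I rewrite each rule so that every state call $q'(x_j)$ appearing on a right-hand side is expanded to $c_{q'}$, with fresh ``residual'' states filling the positions of the holes of $c_{q'}$. The resulting transducer outputs, at every recursive call, everything common to all continuations, so it is earliest by construction.

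For stage two, once $M'$ is earliest two states $q_1,q_2$ compute the same function iff their rules agree up to a renaming of the state symbols; this Myhill--Nerode style characterization enables a standard partition-refinement minimization. I would start with the partition that groups states by the shape of $c_q$, then repeatedly split a block whenever two states within it invoke states lying in different blocks. The procedure terminates in polynomially many rounds, and the resulting transducer is canonical. Uniqueness up to state renaming follows because any two canonical transducers for the same translation must induce the same Myhill--Nerode partition of states.

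The hard part, I expect, is the polynomial time bound for stage one: a priori the trees $c_q$ could blow up, since they are defined by a fixpoint over a tree domain. One has to argue that, because the right-hand sides of $M$ have bounded size and each iteration grows $c_q$ only by plugging in currently known common prefixes of the called states, the stable $c_q$ admit a compact representation of total size polynomial in $|M|$; together with the polynomial cost of partition refinement, this yields the claimed polynomial time construction.
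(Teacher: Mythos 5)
Your two-stage plan (earliest normalization, then state merging) is exactly the route the paper takes; the only cosmetic difference is that you compute the full common prefix $c_q$ by one fixpoint and splice it in at once, whereas the paper repeatedly pulls out a single root symbol $\delta$ whenever all $(q,\sigma)$-rules agree on it -- these converge to the same earliest form. Two remarks. First, the size worry you flag at the end is a non-issue for \emph{total} transducers and you should discharge it explicitly: for any $a\in\Sigma^{(0)}$ the rule right-hand side $\rhs(q,a)$ contains no state calls, so $M_q(a)=\rhs(q,a)$ and $c_q$ is a prefix of $\rhs(q,a)$; hence $|c_q|\leq|M|$, each rewritten right-hand side has size $O(|M|^2)$, and the number of residual states is $O(|Q|\cdot|M|)$. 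The increasing fixpoint therefore stabilizes after polynomially many rounds. Also, your initial partition for the refinement should be by the $\Delta$-labelled shapes of the right-hand sides (after the earliest step every state has $c_q$ equal to a single hole, so partitioning by ``the shape of $c_q$'' gives the trivial partition and your splitting rule, which only looks at which blocks the invoked states lie in, would never separate states producing different output symbols).

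The genuine gap is the treatment of the initial state. Your rewriting pushes each $c_{q'}$ into the right-hand sides that \emph{call} $q'$, but nothing calls $q_0$: if the whole translation has a nontrivial common output prefix, i.e.\ $c_{q_0}$ is not a single hole, there is no rule into which it can be hoisted, so the resulting transducer is still not earliest at the root and uniqueness fails. The paper resolves this by changing the model: canonical transducers have no initial state but an \emph{axiom tree} $A\in T_{\Delta\cup Q}(\{x_0\})$, initialized to $q_0(x_0)$ and rewritten along with the rules, so that $c_{q_0}$ ends up in the axiom. Without this (or some equivalent device) your construction does not produce the canonical form, and two equivalent input transducers need not yield isomorphic outputs. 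The remaining assertions you leave unproved -- that for earliest transducers semantic equality of states coincides with the syntactic congruence computed by partition refinement, and that this yields uniqueness -- are also left to the cited reference in the paper, so you are on par there.
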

\begin{proof}
The canonical transducers are top-down tree transducers 
without an initial state, but with 
an \emph{axiom tree} $A\in T_{\Delta\cup Q}(\{x_0\})$. 
This means that the translation on input tree $s\in T_\Sigma$
starts with the tree $A[x_0\leftarrow s]$ 
(instead of $q_0(s)$ for ordinary transducers).

Starting with $M$, we define its axiom $A=q_0(x_0)$.
In a first step, an earliest transducer is constructed:
if there is a state $q$ and an output symbol $\delta$ (of rank $k$)
such that $\rhs(q,\sigma)[\epsilon]=\delta$ for every input symbol $\sigma$,
then $M$ is \emph{not} earliest. 
Intuitively, the symbol $\delta$ should be produced earlier, at
each call of the state $q$. Thus, the construction 
replaces $q(x_i)$ in all right-hand sides (and in the axiom $A$) by 
$\delta(\langle q,1\rangle(x_i),\dots,\langle q,k\rangle(x_i))$
where the $\langle q,j\rangle$ are new states.
For every $\sigma$, $\rhs(\langle q,j\rangle,\sigma)$ 
is defined as the $j$-th subtree of the root of 
$\rhs(q,\sigma)$ -- beware, this right-hand side may have
changed due to the replacement above. 
Finally we remove $q$ and its rules.
This step is repeated until it cannot be applied anymore. 
In this case $M$ has become earliest and 
no $q$ and $\delta$ exists such that
$M_q(s)=\delta(\dots)$ for all input trees $s$ of $q$. 
It should be clear that the earliest step can be carried
out in polynomial time.
In the second step, equivalent states are merged to obtain the canonical transducer;
the corresponding 
equivalence relation on states is computed using
fixed point iteration in cubic time (with respect to the size of $M$).
It is computed in such a way that if $q\not= q'$, then $M_q\not= M_{q'}$.
\end{proof}

Note that the availability of a canonical (``minimal'') transducer
has many advantages. For instance, it makes possible to formulate a
Myhill-Nerode like theorem which, in turn, makes possible Gold-style
learning of top-down tree transducers (in polynomial time),
as shown by Lemay, Maneth, and Niehren~\cite{DBLP:conf/pods/LemayMN10}.

Consider the two transducers $M$ and $N$ with the rules given
in Figure~\ref{fig:diff}. To construct a canonical equivalent transducer
for $M$ according to Theorem~\ref{theo:EMS}, we observe that state
$q'$ of $M$ is \emph{not} earliest: the root equals $a$ for 
the right-hand sides of all $q$-rules. We replace 
$q'(x_1)$ by $a(\langle q',1\rangle(x_1))$ in the $(q,a)$-rule, and
introduce the two rules $\langle q',1\rangle(a(x_1))\to a(q(x_1))$ and
$\langle q',1\rangle(e)\to e$. We remove $q'$ and have obtained 
an earliest transducer. The canonical transducer is constructed
by realizing that the states $\langle q',1\rangle$ and $q$ are equivalent
and hence can be merged. The rules of the canonical transducer can
thus be given as
\[
\begin{array}{lcl}
q_0(a(x_1))&\to&d(q(x_1),q_0(x_1))\\
q_0(e)&\to&e\\
q(a(x_1))&\to&a(q(x_1))\\
q(e)&\to& e.
\end{array}
\]
To construct the canonical transducer for $N$, we observe that state $p$
is not earliest: the root equals $d$ in both rules. 
We thus replace $p(x_1)$ everywhere by $d(p_1(x_1),p_2(x_1))$ where
$p_1,p_2$ are new states. After this replacement, the 
current $(p,a)$-rule is:
\[
p(a(x_1))\to d(a(p'(x_1)),d(p_1(x_1),p_2(x_1))).
\]
Thus, new $a$-rules are
$p_1(a(x_1))\to a(p'(x_1))$ and
$p_2(a(x_1))\to d(p_1(x_1),p_2(x_1))$.
The $e$-rules are $p_1(e)\to e$ and $p_2(e)\to e$.
The resulting transducer is earliest.
We now compute that $p_1\equiv p'$ and that $p_2\equiv p_0$.
We merge these pairs of states and obtain the same transducer
(up to renaming of states) as the canonical one of $M$ above.
Hence, $M$ and $N$ are equivalent.

As a consequence of Theorem~\ref{theo:EMS}
we obtain that equivalence of total top-down
tree transducers can be decided in polynomial time. 

\begin{theorem}\rm\label{theo:complex}
Equivalence of total top-down tree transducers can be
decided in polynomial time.
\end{theorem}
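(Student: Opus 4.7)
My plan is to reduce equivalence of total top-down tree transducers to an isomorphism test between canonical forms. Given inputs $M_1$ and $M_2$, first apply Theorem~\ref{theo:EMS} to each, obtaining in polynomial time equivalent canonical transducers $N_1$ and $N_2$ with axioms $A_1,A_2$ and state sets $Q_1,Q_2$ respectively.

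Since canonical transducers are earliest and satisfy $N_{i,q}\neq N_{i,q'}$ whenever $q\neq q'$, I expect the canonical form of a given translation to be unique up to renaming of states. Granted this, $M_1$ and $M_2$ are equivalent if and only if there is a bijection $\varphi:Q_1\to Q_2$ such that (i) renaming every state $q$ occurring in $A_1$ to $\varphi(q)$ yields $A_2$, and (ii) for every $q\in Q_1$ and every $\sigma\in\Sigma$, renaming states in $\rhs(q,\sigma)$ according to $\varphi$ yields $\rhs(\varphi(q),\sigma)$.

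To decide the existence of such $\varphi$ in polynomial time, I would perform a breadth-first search starting from the pair $(A_1,A_2)$: first check that the $\Delta$-skeletons of the two axioms agree, thereby reading off initial pairings of states; then repeatedly pick a fresh pair $(q,q')$ and, for each $\sigma\in\Sigma$, compare $\rhs(q,\sigma)$ with $\rhs(q',\sigma)$, verifying agreement of $\Delta$-skeletons and extending $\varphi$ by the induced correspondence on state occurrences, failing as soon as a conflict with the existing $\varphi$ or with the skeletons is detected. Each pair is expanded at most once and each rule comparison takes time polynomial in $|M_1|+|M_2|$, so the whole procedure runs in polynomial time.

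The main obstacle is not the algorithm but its justification: one must argue that the construction of Theorem~\ref{theo:EMS} produces a form in which equivalence of translations coincides with isomorphism of rule systems. This should follow because in a canonical transducer the assignment $q\mapsto N_{i,q}$ is injective, so any equivalence $M_1\equiv M_2$ determines a unique candidate bijection $\varphi$ by matching states that realise the same state-translation; the earliest property then pins down which output symbol must appear at each position of a right-hand side, forcing the right-hand sides of corresponding rules to agree under $\varphi$.
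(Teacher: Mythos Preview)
Your approach is exactly the one the paper intends: it states Theorem~\ref{theo:complex} as an immediate consequence of Theorem~\ref{theo:EMS}, i.e., compute the canonical (earliest, state-distinguishing) normal forms in polynomial time and then test isomorphism by propagating a state bijection from the axioms through the rules. Your BFS-style isomorphism check and the justification via injectivity of $q\mapsto N_{i,q}$ plus earliestness are precisely the details that the paper leaves implicit.
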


The earliest normal form has also certain ``disadvantages''.
For instance, it does not preserve linearity (or nondeletingness) of the
transducer. Consider for $\Sigma=\{a^{(2)},e^{(0)}\}$ the rules
$q(a(x_1,x_2))\to d(q(x_1),q(x_2))$
and $q(e)\to d(e,e)$.
When making earliest, these rules are removed and a rule such as 
$q_1(f(x_1))\to q(x_1)$ is replaced by
$q_1(f(x_1))\to d(\langle q,1\rangle(x_1),\langle q,2\rangle(x_1))$
which is non-linear. It is also deleting:
$\langle q,1\rangle(a(x_1,x_2))\to q(x_1)$.

\subsection{Bottom-Up Tree Transducers}

As a corollary of Theorem~\ref{theo:dtr} we obtain that
also for deterministic bottom-up tree transducers,
equivalence is decidable.
This follows from the fact that every deterministic bottom-up tree transducer
can be transformed into an equivalent deterministic top-down tree transducer
with regular look-ahead~\cite{DBLP:journals/mst/Engelfriet77}.

\begin{theorem}\rm\label{theo:bu}
Equivalence of deterministic bottom-up tree transducers is decidable.
\end{theorem}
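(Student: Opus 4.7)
The plan is to reduce the problem to Theorem~\ref{theo:dtr} via the classical simulation of deterministic bottom-up tree transducers by top-down tree transducers with regular look-ahead. Given two deterministic bottom-up tree transducers $B_1, B_2$, I would first invoke the result of Engelfriet~\cite{DBLP:journals/mst/Engelfriet77}: for each $B_i$ there is an effectively constructible deterministic top-down tree transducer $N_i$ with regular look-ahead such that $\tau_{N_i}=\tau_{B_i}$. Hence $\tau_{B_1}=\tau_{B_2}$ iff $\tau_{N_1}=\tau_{N_2}$.

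Having reduced to the top-down-with-look-ahead case, I would apply Theorem~\ref{theo:dtr} to decide the equivalence of $N_1$ and $N_2$. The construction there already handles domains (they are effectively regular by Lemma~\ref{lm:inv}, and regular tree languages are effectively closed under intersection), so no extra work is needed to restrict to the common domain $\dom{\tau_{B_1}}\cap\dom{\tau_{B_2}}$ when checking whether outputs agree; moreover this common domain is used to ensure that on inputs outside it the two transducers agree vacuously (both undefined).

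The only content one really has to trust here is the cited simulation lemma. Its idea is that the bottom-up state reached on a subtree $s_i$ is exactly the information a top-down transducer needs at the root before descending; a look-ahead automaton identical (modulo finality) to the bottom-up automaton of $B_i$ computes this information, and a top-down rule for $\sigma(x_1,\dots,x_k)$ with look-ahead $\langle p_1,\dots,p_k\rangle$ can then simulate the bottom-up rule that applies to $\sigma$ with child-states $p_1,\dots,p_k$, recursively calling top-down states whose output matches the bottom-up output. Once this standard construction is in place, the theorem follows immediately from Theorem~\ref{theo:dtr}.

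The main potential obstacle is that one might worry about a blow-up in transducer size, but the statement is about decidability only, so this is irrelevant; there is no conceptual difficulty beyond assembling the two cited results.
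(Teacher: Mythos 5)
Your proposal is correct and follows exactly the paper's route: reduce to top-down tree transducers with regular look-ahead via Engelfriet's classical simulation and then invoke Theorem~\ref{theo:dtr}. The paper merely spells out the simulation explicitly (the look-ahead automaton is the bottom-up transducer's state-transition behaviour, and a single top-down state $p$ suffices, with rule $p(\sigma(x_1,\dots,x_k))\to t[x_i\leftarrow p(x_i)]$ guarded by $\langle q_1,\dots,q_k\rangle$), whereas you cite it; the content is the same.
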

\begin{proof}
A deterministic bottom-up tree transducer is a 
tuple $B=(\Sigma,\Delta,Q,Q_f,R)$
where $Q_f\subseteq Q$ is the set of final states and $R$ contains
for every $k\geq 0$, $\sigma\in\Sigma^{(k)}$, and $q_1,\dots,q_k\in Q$
at most one rule of the form 
$\sigma(q_1(x_1),\dots,q_k(x_k))\to q(t)$ where
$t$ is a tree in $T_\Delta(X_k)$.
We construct in linear time a deterministic bottom-up 
tree automaton which for every rule
as above has the transition $\delta_\sigma(q_1,\dots,q_k)\to q$.
This automaton serves as the look-ahead automaton of a top-down
tree transducer with the unique state $p$.
For a rule as above, the transducer has the rule
\[
p(\sigma(x_1,\dots,x_k))\to t[x_i\leftarrow p(x_i)\mid i\in[k]]\quad\langle q_1,\dots,q_k\rangle.
\]
It should be clear that the resulting top-down tree transducer 
with look-ahead $T$ (which has only the single state $p$)
is equivalent to the given bottom-up tree transducer $B$.
\end{proof}

The equivalence problem for bottom-up tree transducers was first solved
by Zachar~\cite{DBLP:journals/actaC/Zachar80}.
It was shown by Seidl~\cite{DBLP:journals/tcs/Seidl92} 
that equivalence can be decided in polynomial time for
single-valued (i.e., functional) nondeterministic bottom-up tree transducers. 
Note that this also follows from Theorem~\ref{theo:j} and the (polynomial time)
construction in the proof of Theorem~\ref{theo:bu}.
This result was extended to finite-valued nondeterministic bottom-up
tree transducers by Seidl~\cite{DBLP:journals/mst/Seidl94}.
For nondeterministic letter-to-letter bottom-up tree transducers, equivalence
was shown decidable by Andre and Bossut~\cite{DBLP:conf/tapsoft/AndreB95};
such transducers contain exactly one output symbol in the right-hand side
of each rule.
They reduce the problem to the equivalence of bottom-up relabelings which
was solved by Bozapalidis~\cite{DBLP:journals/tcs/Bozapalidis92}.
For deterministic bottom-up tree transducers the effective existence of a
canonical normal form, similar in spirit to the earliest normal
form of top-down tree transducers, was shown by Friese, Seidl, 
and Maneth~\cite{DBLP:journals/ijfcs/FrieseSM11}. They show that this normal
form can be constructed in polynomial time, if each state of the given
transducer produces either none or infinitely many outputs; hence, 
equivalence is decidable in polynomial time for such transducers.
Friese presents in her PhD thesis~\cite{Friese10Thesis} a Myhill-Nerode theorem
for bottom-up tree transducers.

\subsection{Linear Size Increase mtts}

It was shown by Engelfriet and Maneth~\cite{DBLP:journals/siamcomp/EngelfrietM03}
that total deterministic mtts of linear size increase 
characterize the total deterministic
$\mso$ definable tree translations. In fact, even any composition of total deterministic
mtts, when restricted to linear size increase, is equal to an $\mso$ definable translation,
as shown by Maneth~\cite{DBLP:conf/fsttcs/Maneth03}.
The $\mso$ definable tree translations are a special instance of the
$\mso$ definable graph translations, introduced by Courcelle and
Engelfriet, see~\cite{DBLP:books/daglib/0030804}.
Decidability of equivalence for deterministic $\mso$ graph-to-string translations 
on a context-free graph language was proved by Engelfriet and Maneth~\cite{DBLP:journals/ipl/EngelfrietM06}.
It implies decidable equivalence also for $\mso$ tree translations.
We present a proof of the latter here that only uses $\mtt$s and avoids going through $\mso$. 

The idea of the proof stems from Gurari's proof~\cite{DBLP:journals/siamcomp/Gurari82} of
the decidability of equivalence for $\dgsm$.
In a nutshell: the ranges of all the above translations are Parikh.
A language is \emph{Parikh} if its set of Parikh vectors is equal to the
set of Parikh vectors of a regular language. 
Let $\Sigma=\{a_1,\dots,a_m\}$ be an alphabet.
The \emph{Parikh vector} of a string $w\in\Sigma^*$ 
is the $n$-tuple $(i_1,\dots,i_m)$ of natural numbers $i_j$ such that
for $1\leq j\leq m$, $i_j$ equals the number of occurrences of $a_j$ in $w$.
For a language that is Parikh, it is decidable whether or not it contains
a string with Parikh vector $(n,n,\dots,n)$ for some natural number $n$.
This property is used to prove equivalence as follows.
Given two tree-to-string transducers $M_1, M_2$ we first change $M_i$ to 
produce a new end marker $\$$ at the end of each output string. 
Then, given the regular domain language $D$ of $M_1$ and $M_2$,
and two distinct output letters $a,b$
we construct the Parikh language
\[
L^{a,b}=\{a^mb^n\mid \exists s\in D: M_1(s)/m=a, M_2(s)/n=b\}.
\]
Here $w/m$ denotes the $m$-th letter in the string $w$.
We now decide if there is an $n$ such that $a^nb^n\in L^{a,b}$, using the fact that $L^{a,b}$ is Parikh.
If such an $n$ exists, then the transducers $M_1, M_2$ are \emph{not} equivalent.
If, for all possible $a$, $b$, no such $n$ exists, then we know
that the transducers $M_1, M_2$ are equivalent.

It was shown by Engelfriet, Rozenberg, and Slutzki
in Corollary~3.2.7 of~\cite{DBLP:journals/jcss/EngelfrietRS80}
that ranges of \emph{nondeterministic} finite-copying top-down
tree transducers with regular look-ahead (for short, $\ntrfc$s) possess the Parikh property.
The nondeterminism of this result is useful for defining
the language $L^{a,b}$, because we need to nondeterministically choose $a$ and
$b$ positions $m$ and $n$ of the output strings. 
A nondeterministic top-down tree transducer $M$ is \emph{finite-copying}
if there is number $c$ such that for every $s\in T_\Sigma$ and $u\in V(s)$,
the number of occurrences of states (more precisely, subtrees $q(x)$ such
that $q$ is a state of $M$)
in the tree $M(s[u\leftarrow x])$ is $\leq c$.
We denote the class of translations of nondeterministic finite-copying
top-down tree transducers by $\ntrfc$.

For a tree $t$ we denote by $yt$ its \emph{yield}, i.e., 
the string of its leaf labels from left to right.
For a class $X$ of tree translations we denote by
$yX$ the corresponding class of \emph{tree-to-yield} translations.
The tree-to-yield translations of top-down tree transducers can 
be obtained by top-down tree-to-\emph{string} transducers
which have strings over output symbols and state calls $q(x_i)$ 
in the right-hand sides of their rules.
We repeat the argument given in~\cite{DBLP:journals/jcss/EngelfrietRS80}.
By $\regt$ we denote the class of regular tree languages, i.e.,
those languages recognized by
(deterministic) finite-state bottom-up tree automata.

\begin{lemma}\rm\label{lm:nParikh}
Languages in $y\ntrfc(\regt)$ are Parikh.
\end{lemma}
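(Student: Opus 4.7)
The plan is to show the containment $y\ntrfc(\regt) \subseteq \text{CFL}$ into the context-free string languages and then invoke Parikh's theorem, which guarantees that every context-free language has the Parikh image of a regular language.

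First I would eliminate the regular look-ahead. Given $M \in \ntrfc$ with look-ahead automaton $A$ and a regular tree language $L \in \regt$ recognized by a bottom-up automaton $B$, I form the product $B' = B \times A$, relabel each input tree by annotating every node with its $B'$-state at that node, and rewrite the rules of $M$ to act on the annotated alphabet, consulting the $A$-component of the annotation in place of a look-ahead step. The annotated version $L'$ of $L$ is again in $\regt$, and the resulting look-ahead-free transducer $M'$ is still in $\ntrfc$ with the same copying bound $c$, and $yM'(L') = yM(L)$.

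Next I would construct a context-free grammar $G$ with $L(G) = yM'(L')$. Let $B'' = (P,\Sigma',\delta,P_f)$ be a bottom-up tree automaton for $L'$. The nonterminals of $G$ are tuples $\langle p;\, q_1,\dots,q_j\rangle$ with $p\in P$, $0\le j\le c$, and each $q_i$ a state of $M'$; the intended meaning is that such a nonterminal derives strings of the form $yM'_{q_1}(t)\cdots yM'_{q_j}(t)$ as $t$ ranges over input trees with $\delta(t)=p$, where the order of the $q_i$ is the left-to-right order in which these calls will actually appear in the combined yield. Starting symbols are $\langle p_f;\, q_0\rangle$ for $p_f\in P_f$. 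For each $\sigma\in{\Sigma'}^{(k)}$, each transition $\delta_\sigma(p_1,\dots,p_k)=p$, each tuple $(q_1,\dots,q_j)$, and each nondeterministic choice of a $(q_i,\sigma)$-rule for every $i\le j$, one reads off the combined left-to-right yield of the $j$ chosen right-hand sides: symbols in $\Delta$ become terminals of $G$, and for each input variable $x_\ell$ the sequence of state calls on $x_\ell$ appearing in this combined yield is grouped (in that order) into a single nonterminal $\langle p_\ell;\,\dots\rangle$. Finite copying guarantees that the total number of state calls across all $x_\ell$ is at most $c$, so the tuples (and hence the set of nonterminals) are bounded in length; $G$ is a finite CFG. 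Applying Parikh's theorem, $L(G) = yM(L)$ is Parikh.

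The main obstacle is the CFG construction in the second step: since state calls coming from different $q_i$'s can be interleaved in the combined yield, the nonterminal must encode the precise ordered sequence of states processing a subtree, and one must verify that partitioning the combined yield by input variable produces, for each $\ell$, a sequence of at most $c$ calls (so that the child nonterminal already exists in the grammar) and that these partitions are consistent with the $B''$-transition chosen. The finite-copying bound $c$ is exactly what makes this bookkeeping terminate in a finite grammar rather than an infinite schema.
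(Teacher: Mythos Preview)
Your overall strategy---reduce to a context-free language via state sequences and then invoke Parikh's theorem---is the same as the paper's, but the precise claim you make is false, and this is a real gap. You assert that $y\ntrfc(\regt)\subseteq\text{CFL}$ and that your grammar $G$ satisfies $L(G)=yM'(L')$. Neither holds. Take the finite-copying transducer with rules $q_0(a(x_1))\to f(q_1(x_1),q_2(x_1),q_3(x_1))$ and $q_i(b(x_1))\to g(c_i,q_i(x_1))$, $q_i(e)\to e'$ for $i=1,2,3$ with distinct leaf symbols $c_1,c_2,c_3$. On the regular input set $\{a(b^n(e))\mid n\ge 0\}$ the yield language is $\{c_1^{\,n}e'\,c_2^{\,n}e'\,c_3^{\,n}e'\mid n\ge 0\}$, which is not context-free. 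The obstruction is exactly the one you flag in your last paragraph: calls to the same child $x_\ell$ need not be contiguous in the combined yield, so ``grouping'' them into a single nonterminal necessarily permutes the output string. Your grammar therefore generates not $yM'(L')$ but a language that is only Parikh-equivalent to it.

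The fix is simply to weaken the claim: your grammar $G$ produces, for every input $s$, a permutation of $yM'(s)$, so $L(G)$ and $yM'(L')$ have the same Parikh image; since $L(G)$ is context-free, Parikh's theorem finishes the argument. This is precisely what the paper does, though phrased slightly differently: it replaces $M$ by a \emph{linear} top-down transducer $M'$ whose states are bounded state sequences $\langle q_1,\dots,q_n\rangle$ of $M$ (with right-hand sides obtained by concatenating the corresponding right-hand sides of $M$), observes that $M'(s)$ is a permutation of $M(s)$, and then uses that linear top-down transducers preserve regularity so that $M'(R)\in y\regt\subseteq\text{CFL}$. Your CFG construction and the paper's linear-transducer construction are two packagings of the same state-sequence idea; the paper simply never claims exact language equality, only equality of Parikh images.
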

\begin{proof}
Let $M$ be a $y\ntrfc$ transducer and let $R\in\regt$.
A top-down transducer is \emph{linear} if no $x_i$ appears more than once
in any of the right-hand sides of its rules.
We construct a linear
transducer $M'$ such that $\dom{M'}=\dom{M}$ and 
the string $M'(s)$ is a permutation of the string $M(s)$, for every $s\in\dom{M}$.
The new transducer computes in its states the state sequences of $M$, i.e.,
the sequence of states that are translating the current input node.
Since $M$ is finite-copying, there effectively exists a bound $c$ on the
length of the state sequences. For a new state $\langle q_1,\dots, q_n\rangle$
with $n\leq c$ the right-hand side of a rule is obtained by simply concatenating the 
right-hand sides of the corresponding rules for $q_i$.
It is well known that linear top-down tree transducers preserve regularity 
and hence the language $M'(R)$ is in $y\regt$, i.e., 
it is the yield language of a regular tree language.
The latter is obviously a context-free language 
(cf. Theorem~3.8 of~\cite{DBLP:journals/jcss/Thatcher70}) which is
Parikh by Parikh's theorem~\cite{DBLP:journals/jacm/Parikh66}.
\end{proof}

A macro tree transducer $M$ is \emph{finite-copying} if there exist constants $k$ and $n$ 
such that 
\begin{enumerate}
\item[(1)] for every input tree $s'=s[u\leftarrow x]$ with $s\in T_\Sigma$ and $u\in V(s)$, 
the number of occurrences of states in $M(s')$ is $\leq k$ and
\item[(2)] for every state $q$ of rank $m+1$, $1\leq j\leq m$, and $s\in T_\Sigma$, the number of occurrences
of $y_j$ in $M_q(s)$ is $\leq n$.
\end{enumerate}

Recall that an $\mtt$ $M$ is of linear size increase if there is a constant $c$ 
such that $|\tau_M(s)|\leq c\cdot |s|$ for every $s\in T_\Sigma$. 
We denote the class of translations realized by $\mtt$s of linear size increase
by $\mttlsi$.

\begin{lemma}\rm\label{lm:ymtt}
$(y\mttlsi)\subseteq y\trfc$.
\end{lemma}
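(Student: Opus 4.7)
The plan is to first convert $M$ into a finite-copying $\mtt$ and then extract from it a $\trfc$ by reading off the yield decomposition of each right-hand side. First, I would invoke the structural consequence of linear size increase: every $\mttlsi$ is effectively equivalent to a finite-copying $\mtt$ $M'$, with explicit constants $k$ (bounding the number of state occurrences in any partial output tree) and $n$ (bounding the number of occurrences of each $y_j$ in any $M'_q(s)$). Intuitively, linear size increase prohibits $M$ from producing super-linearly many output nodes per input node, which in turn rules out unbounded duplication of intermediate computations. To make this constructive, Lemma~\ref{lm:inv} is used repeatedly: the inputs on which a given state occurrence actually contributes to the final output form a regular set, as does the set of inputs for which a given parameter is used more than a fixed number of times, so regular look-ahead allows the rules of $M$ to be reorganised so that both finite-copying bounds hold.

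Second, given such an $M'$, I would build a $\trfc$ $N$ whose yield agrees with $y\circ\tau_{M'}$. For every state $q$ of rank $m+1$ and every input tree $s$ in the domain of $q$, the yield of $M'_q(s,t_1,\dots,t_m)$ has the form
\[
u_0 \cdot y t_{j_1} \cdot u_1 \cdot y t_{j_2} \cdots u_{r-1} \cdot y t_{j_r} \cdot u_r,
\]
with $r\leq nm$, and with both the index sequence $(j_1,\dots,j_r)$ and the intervening strings $u_0,\dots,u_r$ depending only on $s$. By Lemma~\ref{lm:inv}, for each fixed pattern $(j_1,\dots,j_r)$ the set of inputs realising it is regular, so a finite look-ahead automaton suffices to recover this information. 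For each state $q$ and each $\ell\in\{0,\dots,nm\}$, $N$ is equipped with a rank-$1$ state $\langle q,\ell\rangle$ that produces the string $u_\ell$, with rules obtained by splitting each right-hand side of $M'$ at its parameter occurrences in the order dictated by the look-ahead; recursive calls $q'(x_i,t_1,\dots,t_{m'})$ inside a right-hand side of $M'$ are translated into the corresponding interleaved sequence of calls to the states $\langle q',\ell'\rangle(x_i)$ in $N$.

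Finally, $N$ is finite-copying: any partial input induces at most $k$ state occurrences of $M'$, each of which contributes at most $nm+1$ state occurrences of $N$, so the total is bounded by $k(nm+1)$. The main obstacle is the first step: transforming an LSI $\mtt$ into a genuinely finite-copying $\mtt$ is the substantive technical content of the lemma and relies on a careful structural analysis of LSI transducers, whereas the second step is essentially a mechanical unfolding once finite-copying is in hand.
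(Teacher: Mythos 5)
Your proof is correct and follows essentially the same route as the paper: first invoke the Engelfriet--Maneth result that every linear size increase $\mtt$ is effectively finite-copying (with regular look-ahead), then eliminate the context parameters by splitting each state $q$ into rank-one chunk-producing states, using look-ahead (justified via Lemma~\ref{lm:inv}) to recover how the parameter occurrences partition the yield $yM'_q(s)$. The only organizational difference is that the paper first normalizes the finite-copying transducer to be ``special in the parameters'' (each $y_j$ occurring exactly once, so that the yield splits into exactly $m+1$ chunks $w_0,y_1w_1,\dots,y_mw_m$) before splitting, whereas you handle up to $n$ occurrences per parameter directly by reading the full occurrence pattern off the look-ahead and allowing up to $nm+1$ chunks; both variants work, rest on the same external results, and give the same finite-copying bound argument for the resulting top-down transducer.
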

\begin{proof}
It was shown in~\cite{DBLP:journals/siamcomp/EngelfrietM03} 
how to construct a finite-copying macro tree transducer
with look-ahead, for a given macro tree transducer of linear size increase.
The construction goes through several normal forms which make sure that
the transducer generates only finitely many copies; most essentially, the ``proper''
normal form: each state produces infinitely many output trees, and, each parameter
is instantiated by infinitely many trees. By using regular look-ahead finitely
many different trees can be determined and outputted directly.
The idea of the proper normal form was used already by 
Aho and Ullmann for top-down tree transducers~\cite{DBLP:journals/iandc/AhoU71}.

It was shown in Lemmas~6.3 and~6.6 of~\cite{DBLP:journals/iandc/EngelfrietM99}
that $M$ can be changed into an equivalent transducer which is
``special in the parameters''. This means that it is linear and nondeleting in
the parameters, i.e., each parameter $y_j$ of a state $q$ appears \emph{exactly once} 
in the right-hand side of each $(q,\sigma)$-rule.
The idea is to simply provide multiple parameters, whenever parameters are copied,
and to use regular look-ahead in order to determine which parameters are deleted. 
This was mentioned above Lemma~\ref{lm:inv} already. 
For a $y\mttr$ transducer that is special in the parameters, it was shown in
Lemma~13 of~\cite{DBLP:journals/jcss/EngelfrietM02} how to construct an equivalent
$y\tr$ transducer. The parameters of the $y\mtt$ can be removed by outputting the
strings between them directly.  Since each parameter appears once, the final string 
$M_q(s)$ is divided into $m+1$ string chunks $w_j$ (where $m+1$ is the rank of $q$):
$w_0,y_1w_1,\dots,y_mw_m$. We leave further details as an exercise,
and suggest to start with the case that all $y_j$ appear in strictly increasing
order at the leaves of any $M_q(s)$.
It is not difficult to see that the construction preserves 
finite-copying.
\end{proof}

\begin{lemma}\rm\label{lm:ab}
Let $M_1, M_2$ be $y\trfc$ transducers with input and output alphabets $\Sigma$ and $\Delta$, 
and let $a,b\in\Delta$ with $a\not=b$.
Let $D\subseteq T_\Sigma$ be a regular tree language.
The language $L^{a,b}=\{a^m\# b^n\mid \exists s\in D: M_1(s)/m=a, M_2(s)/n=b\}$ is Parikh.
\end{lemma}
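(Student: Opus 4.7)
The plan is to build a $y\ntrfc$ transducer $M'$ with $M'(D)=L^{a,b}$, and then apply Lemma~\ref{lm:nParikh}.

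For $i\in\{1,2\}$, with $c_1=a$ and $c_2=b$, I first construct a $y\ntrfc$ transducer $M_i'$ satisfying $M_i'(s)=\{c_i^{\,m}\mid M_i(s)/m=c_i\}$ by a ``marker'' construction. Triple each state $q$ of $M_i$ into $(q,L)$, $(q,M)$, $(q,R)$, standing for ``entirely left of the marker'', ``contains the marker'', and ``entirely right of the marker''. Given a rule $q(\sigma(x_1,\dots,x_k))\to t$ of $M_i$, where $t$ is a string over $\Delta$ and state calls, the $(q,L)$-rule outputs $t$ with every letter replaced by $c_i$ and every call $q'(x_j)$ relabelled to $(q',L)(x_j)$; the $(q,R)$-rule outputs $t$ with every letter erased and every call relabelled to mode $R$; the $(q,M)$-rule nondeterministically selects one position of $t$ to carry the marker. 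If the marker sits at a letter, that letter must equal $c_i$; letters to its left are mapped to $c_i$ and those to its right are erased, while state calls receive mode $L$ or $R$ according to side. If the marker sits at a state call $q'(x_j)$, that call becomes $(q',M)(x_j)$ and the remaining positions are handled as above. The initial state is $(q_{0,i},M)$, and the look-ahead of $M_i$ is reused unchanged.

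Next I combine $M_1'$ and $M_2'$ into a single $y\ntrfc$ transducer $M'$ via a standard product-at-the-root construction: for every root symbol $\sigma$ and every pair consisting of an initial rule of $M_1'$ and an initial rule of $M_2'$ on $\sigma$, a new initial rule of $M'$ concatenates their right-hand sides with $\#$ in between, taking the product of look-ahead states on each $x_j$; below the root, $M_1'$ and $M_2'$ continue independently, so the combined copying bound is the sum of the two. Restricting the domain to $D$ is done by further intersecting the look-ahead with an automaton for $D$, which is regular by assumption. By construction $M'(D)$ is exactly the set of strings $a^m\#b^n$ for which there exists $s\in D$ with $M_1(s)/m=a$ and $M_2(s)/n=b$, i.e., $M'(D)=L^{a,b}$, so $L^{a,b}$ is Parikh by Lemma~\ref{lm:nParikh}.

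The delicate step is verifying finite-copying in the marker construction, since a naive count suggests three calls of $M_i'$ per call of $M_i$. However, any single accepting run of $M_i'$ corresponds to a unique marker path from the root to one output position, and in such a run each call of $M_i$ on an input node is mirrored by exactly one mode-labelled call of $M_i'$ on that node; hence the per-node copying bound of $M_i$ is inherited, and the product with $M_2'$ adds only a bounded constant, keeping $M'$ finite-copying throughout.
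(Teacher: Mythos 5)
Your proof is correct and follows essentially the same strategy as the paper: nondeterministically guess the marked output position by annotating states with their position relative to the marker, rename all emitted letters to $a$ (resp.\ $b$), concatenate the two transducers at the root with $\#$ using product look-ahead, and invoke Lemma~\ref{lm:nParikh}; your observation that each original state call is mirrored by exactly one mode-labelled call, so finite-copying is inherited, is exactly the right justification. The only cosmetic differences are that the paper truncates the right-hand side after the marker instead of carrying an explicit erasing $R$-mode, and feeds $D$ directly to the transducer rather than folding it into the look-ahead.
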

\begin{proof}
Let us assume that the state sets $Q_1, Q_2$ of the transducers $M_1,M_2$ are
disjoint. The initial state of $M_1, M_2$ is $q_0$ and $p_0$, respectively.
We first construct a $y\ntrfc$ transducer $M_1'$ such that 
\[
M_1'(s)=\{ua\mid u\in\Delta^*, \exists v\in\Delta^*: uav = M_1(s)\}.
\]
Its state set is $Q_0=Q_1\cup \{q_a\mid q\in Q_1\}$ and its initial state is $q_{0,a}$.
It has all rules of $M_1$ and, moreover, for every rule 
$q(\sigma(x_1,\dots,x_k))\to w\quad\langle\cdots\rangle$
of $M_1$, whenever $w=uav$ it has the rule 
$q_a(\sigma(x_1,\dots,x_k))\to ua\quad\langle\cdots\rangle$, and whenever
$w=uq'(x_i)v$ it has the rule  
$q_a(\sigma(x_1,\dots,x_k))\to uq_a'(x_i)\quad\langle\cdots\rangle$.
From $M_1'$ one obtains a $y\ntrfc$ transducer $M_1''$ such that
\[
M_1''(s)=\{a^m\mid M_1(s)/m=a\}
\]
by simply changing all symbols of $\Delta$
into $a$  in the rules of $M_1'$. Similarly, one obtains a transducer $M_2''$ 
such that $M_2''(s)=\{b^n\mid M_2(s)/n=b\}$.
Finally, a $y\ntrfc$ transducer $M$ is defined such that 
$M(s)=\{ a^m\# b^n\mid M_1(s)/m=a, M_2(s)/n=b\}$.
Its state set is $\{ r_0\}\cup Q_1' \cup Q_2'$ with initial state $r_0$. 
The look-ahead automaton of $M$ is the product automaton of the
look-ahead automata of $M_1$ and $M_2$. 
The set of rules of $M$ is the union of those of $M_1''$ and $M_2''$, adapted 
to the new look-ahead appropriately. 
Moreover, for $\sigma\in\Sigma^{(k)}$, $k\geq 0$, and rules
$q_{0,a}(\sigma(x_1,\dots,x_k))\to u\quad\langle q_1',\dots,q_k'\rangle$ and
$p_{0,b}(\sigma(x_1,\dots,x_k))\to w\quad\langle p_1',\dots,p_k'\rangle$, we let 
\[
r_0(\sigma(x_1,\dots,x_k))\to u\# w\quad\langle (q_1',p_1'),\dots,(q_k',p_k')\rangle
\]
be a rule of $M$. 
Obviously, $M(s)$ equals the concatenation $M_1''(s)\# M_2''(s)$,
and is finite-copying.
Since $M(D)=L^{a,b}$ it follows by Lemma~\ref{lm:nParikh} that $L^{a,b}$ is Parikh.
\end{proof}

\begin{theorem}\label{theo:mttlsi}
Equivalence of deterministic macro tree transducers of linear size increase
is decidable.
\end{theorem}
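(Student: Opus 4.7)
The plan is to decide equivalence of two $\mtt$s $M_1,M_2$ of linear size increase over a common output alphabet $\Delta$ by combining the three preceding lemmas with the Parikh-based strategy of Gurari sketched just above.

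The first step is to reduce tree equivalence to string equivalence. I would replace each $M_i$ by a tree-to-string $\mtt$ $M_i'$ that outputs an injective encoding of $M_i(s)$, for instance a fully parenthesized pre-order traversal using fresh brackets and commas, followed by a fresh end marker $\$$. This is a rule-by-rule rewriting: each output subtree $\delta(u_1,\dots,u_k)$ in a right-hand side is flattened to a string encoding, and parameters, which previously held trees, are now treated as monadic strings that are concatenated in place. Because the encoding is injective and the $\$$ marker aligns lengths, $M_1\equiv M_2$ iff $M_1'\equiv M_2'$; linear size increase is preserved since $|M_i'(s)|=O(|M_i(s)|)$, so $M_1',M_2'\in y\mttlsi$.

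Next, by Lemma~\ref{lm:ymtt} I would effectively construct equivalent $y\trfc$ transducers $N_1,N_2$. Their domains are effectively regular by Lemma~\ref{lm:inv}; if $\dom{N_1}\ne\dom{N_2}$ one returns ``not equivalent'', otherwise set $D=\dom{N_1}$. Thanks to the end marker, $N_1(s)=N_2(s)$ for all $s\in D$ iff there is \emph{no} input $s\in D$, no pair of distinct letters $a\ne b$ in the output alphabet together with $\$$, and no position $n$ with $N_1(s)/n=a$ and $N_2(s)/n=b$. For each such pair $(a,b)$, Lemma~\ref{lm:ab} supplies the Parikh language $L^{a,b}=\{a^m\# b^n\mid \exists s\in D:\ N_1(s)/m=a,\ N_2(s)/n=b\}$. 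Since its Parikh image over $\{a,\#,b\}$ is semilinear and the words $a^n\# b^n$ form the linear set $\{(n,1,n)\mid n\ge 0\}$, it is decidable whether any $a^n\# b^n\in L^{a,b}$ by intersecting semilinear sets and testing emptiness. Iterating over the finitely many pairs $(a,b)$ yields the decision procedure: the transducers are equivalent iff every such test is empty.

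The main obstacle I anticipate lies in Step~1: while conceptually routine, the pre-order flattening must be carried out inside an $\mtt$ without breaking linear size increase, and parameters must be re-interpreted as monadic strings while retaining enough structure for the ``special in the parameters'' normal form invoked by Lemma~\ref{lm:ymtt}. Once this encoding is in hand, the rest is a mechanical composition of the three lemmas with the classical decidability of emptiness for intersections of semilinear subsets of $\nat^3$.
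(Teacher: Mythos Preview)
Your proposal is correct and follows essentially the same route as the paper: encode trees as strings with brackets and commas, add an end marker, pass through Lemma~\ref{lm:ymtt} to land in $y\trfc$, check domains via Lemma~\ref{lm:inv}, and then for each pair $a\neq b$ use Lemma~\ref{lm:ab} together with the semilinearity of the Parikh image to test whether some $a^n\# b^n$ lies in $L^{a,b}$. Your anticipated obstacle in Step~1 is not a real one: the paper simply reinterprets each right-hand side of the given $\mtt$ as a string over $\Delta$ plus brackets and commas, without any restructuring of parameters---parameters remain formal placeholders and the linear size increase is preserved trivially, so Lemma~\ref{lm:ymtt} applies directly.
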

\begin{proof}
Let $M_1, M_2$ be $\mtt$ transducers of linear size increase.
We first check that the domains of $M_i$ coincide. This is decidable
because $\dom{M_i}$ is effectively regular by Lemma~\ref{lm:inv}.
If not then the transducers are not equivalent and we are finished.
Otherwise, let $D$ be their domain. 
We may consider $M_i$ as tree-to-string transducers, by considering
the tree in the right-hand side of each rule as a string (which
uses additional terminals symbols for denoting the tree structure
such as opening and closing parentheses and commas). 
Thus, by Lemma~\ref{lm:ymtt} (which is effective) we may in fact assume 
that $M_1$ and $M_2$ are $y\trfc$ transducers.
Let $\Delta$ be
the output alphabet of $M_i$ and let $\$$ be a new symbol not in $\Delta$. 
We change $M_i$ so that each output string is followed by the $\$$ symbol.
This can easily be done by first splitting the initial state $q_0$ so that it appears
in the right-hand side of no rule, and then adding $\$$ to the end
of each $q_0$-rule. It now holds that $M_1$ and $M_2$ are \emph{not equivalent} if and only
if there exist $a,b\in\Delta$ with $a\not=b$, $s\in D$, and a number $n$ such that
$M_1(s)/n=a$ and $M_2(s)/n=b$. The latter holds if the intersection of $L^{a,b}$ of Lemma~\ref{lm:ab}
with the language $E=\{a^n\# b^n\mid n\in\nat\}$ is nonempty.
Since $L^{a,b}$ is Parikh by Lemma~\ref{lm:ab}, we obtain decidability because
semilinear sets are closed under intersection~\cite{ginspa64,gin66} and have decidable emptiness.
But, there is a much easier proof: $E\cap L$ is context-free, because $E$ is
(by the well-known ``triple construction'', see, e.g.,
Theorem~6.5 of~\cite{DBLP:books/aw/HopcroftU79}),
where $L$ is a regular language with the same Parikh vectors as $L^{a,b}$.
The result follows since context-free grammars have decidable emptiness.
\end{proof}

\subsection{Monadic mtts}

Recall that a macro tree transducer is monadic if both its input 
and output alphabet are monadic, i.e., consist of symbols of
rank one and rank zero only.
We will reduce the equivalence problem for monadic $\mtt$ transducers
to the sequence equivalence problem of $\hdtol$ systems.
An $\mtt$ is \emph{nondeleting} if for every state $q$ of rank $m+1$,
$1\leq j\leq m$, and input symbol $\sigma$, the parameter $y_j$ occurs in $\rhs(q,\sigma)$.
A monadic $\mtt$ $M=(Q,\Sigma,\Delta,q_0,R)$ is \emph{normalized} if
\begin{enumerate}
\item[(N0)] it is nondeleting 
\item[(N1)] each state is of rank two or one, i.e., $Q=Q^{(2)}\cup Q^{(1)}$ and
\item[(N2)] there is only one input and output symbol of rank zero, i.e.,
$\Sigma^{(0)}=\Delta^{(0)}=\{\bot\}$
\end{enumerate}
Note that for total transducers (N1) is a consequence of (N0) because
a $(q,\bot)$-rule can only contain at most one parameter occurrence.

\medskip

{\bf $\hdtol$ systems}\quad
An instance of the $\hdtol$ sequence equivalence problem consists of finite
alphabets $\Sigma$ and $\Delta$, two strings $w_1,w_2\in\Sigma^*$,
homomorphisms $h_j,g_j:\Sigma^*\to\Sigma^*$, $1\leq j\leq n$, and
homomorphisms $h,g:\Sigma^*\to\Delta^*$. To solve the problem we have to determine
whether or not
\[
h(h_{i_k}(\cdots h_{i_1}(w_1)\cdots))=
g(g_{i_k}(\cdots g_{i_1}(w_2)\cdots))
\]
holds true for all $k\geq 0$, $1\leq i_1,\dots,i_k\leq n$.
This problem is known to be decidable. It was first proven 
by Culik~II and Karhum{\"a}ki~\cite{culkar86}, using Ehrenfeucht's Conjecture
and Makanin's algorithm. A later proof of Ruohonen~\cite{ruh86} is based
on the theory of metabelian groups. Yet another, very short, proof
was given by Honkala~\cite{DBLP:journals/tcs/Honkala00a} which only relies on Hilbert's Basis Theorem.
We now show that the equivalence problem for total monadic $\mtt$
transducers can be reduced to the sequence equivalence problem for $\hdtol$ systems. 
For a monadic tree $s=a_1(\cdots a_n(e)\cdots)$ we denote
by $\strip(s)$ the string $a_1\cdots a_n$. 

\begin{lemma}\label{lm:totalmonadic}
Equivalence of total monadic normalized $\mtt$s 
on a regular input language is decidable.
\end{lemma}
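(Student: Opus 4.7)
The plan is to reduce the problem to the $\hdtol$ sequence equivalence problem, which is decidable by the Culik~II--Karhum\"aki theorem cited above. I would first dispose of the regular input restriction: monadic input trees correspond to strings in $(\Sigma^{(1)})^*\bot$, and $L$ corresponds to a regular string language, so a DFA for it exists. Taking the state product of each $M_i$ with this DFA (or equivalently using regular look-ahead and then absorbing it into the states) yields a total normalized monadic $\mtt$ $M_i'$ that behaves as $M_i$ on $L$ and switches to a uniform ``dummy'' state that always produces the output tree $\bot$ on inputs outside $L$. Since both dummies are identical, $M_1\equiv M_2$ on $L$ holds iff $M_1'\equiv M_2'$ holds on all of $T_\Sigma$.

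The heart of the argument is a structural observation: in any normalized monadic $\mtt$ every right-hand side has a unique \emph{main path} from its root to a leaf, obtained by always descending into the ``output-continuation'' child---namely the unique child at a $\delta^{(1)}$ node or at a rank-$1$ state call, and the second child at a rank-$2$ state call $q'(x_1,t)$. This is because monadicity of $\Delta$ together with (N1) forces every branching to originate at a rank-$2$ state call, whose first child must be the leaf $x_1$. A straightforward leaf count then shows that every rhs leaf other than the main-path endpoint is a copy of $x_1$, so the parameter $y_1$ of a rank-$2$ state can occur at most once in each rhs, and by nondeletingness (N0) exactly once, as the main-path endpoint. This yields the decomposition $M_q(s,w)=u_q(s)\cdot w$ for $q\in Q^{(2)}$ with $u_q(s)\in\Delta^*$, and $M_q(s)\in\Delta^*\bot$ for $q\in Q^{(1)}$; both satisfy recurrences on $a\cdot s'$ that are pure concatenations of terminal letters with values $u_{q'}(s')$ or $M_{q''}(s')$ dictated by the main path of the $(q,a)$-rhs.

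Using this, I construct an $\hdtol$ system $H_i$ from each $M_i'$: take alphabet $\Sigma_{H_i}=\Delta\cup\{S_q\mid q\in Q_i\}$, initial word $w_0=S_{q_0}$, and for each $a\in\Sigma^{(1)}$ a homomorphism $h_a:\Sigma_{H_i}^*\to\Sigma_{H_i}^*$ that is the identity on $\Delta$ and sends each $S_q$ to the main path of $\rhs(q,a)$, with every state call $q'(x_1,\ldots)$ replaced by $S_{q'}$ and any trailing $y_1$ deleted. A final homomorphism $h_\bot:\Sigma_{H_i}^*\to\Delta^*$ is obtained analogously from the $(q,\bot)$-rules, which by (N2) contain no state calls. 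A straightforward induction on input length shows $h_\bot(h_{a_k}(\cdots h_{a_1}(w_0)\cdots))=\strip(\tau_{M_i'}(a_1(\cdots a_k(\bot)\cdots)))\cdot\bot$. Taking the disjoint union of $\Sigma_{H_1}$ and $\Sigma_{H_2}$, and extending each homomorphism by the identity outside its original domain, turns ``$M_1'\equiv M_2'$?'' into an instance of the $\hdtol$ sequence equivalence problem.

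The main obstacle will be the careful verification of the structural main-path claim and the associated parameter linearity, together with ensuring that the product with the DFA preserves (N0)--(N2)---in particular, keeping rank-$2$ states nondeleting when a uniform dummy rule is installed at $\bot$ for non-$L$ inputs. Both are routine given the tight constraints of the normal form, but need to be spelled out to avoid hiding difficulty under the rug.
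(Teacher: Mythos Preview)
Your core reduction to the $\hdtol$ sequence equivalence problem is exactly the paper's approach: your ``main path'' linearization of right-hand sides is precisely the paper's $\strip$ map, and the resulting homomorphisms $h_a$ and final homomorphism $h_\bot$ coincide with the paper's construction. The structural analysis showing that each rank-$2$ state is linear in its parameter (so that computations flatten to string concatenations) is correct and is left implicit in the paper.

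There is, however, a genuine gap in your treatment of the regular input language. You propose to absorb the restriction to $L$ into the transducers by a product construction, so that the new $M_i'$ agrees with $M_i$ on $L$ and outputs the tree $\bot$ on inputs outside $L$. This is not possible for a deterministic top-down $\mtt$: the transducer emits output as it descends the monadic input, committing to output symbols long before it reaches the leaf $\bot$ and hence before it can know whether the input belongs to $L$. Concretely, take $M_1$ with rules $q_0(a(x_1)) \to c(q_0(x_1))$ and $q_0(\bot) \to \bot$, and let $L = \{a^{2n}(\bot) \mid n \geq 0\}$. No deterministic $\mtt$ $M_1'$ over the same input alphabet outputs $c^{2n}(\bot)$ on $a^{2n}(\bot)$ and $\bot$ on $a^{2n+1}(\bot)$, since the root of the output is already fixed after reading the first $a$. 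Your parenthetical ``using regular look-ahead and then absorbing it into the states'' does not rescue this: look-ahead cannot in general be eliminated from a top-down transducer without changing the input alphabet. The worry you flag at the end about ``keeping rank-$2$ states nondeleting when a uniform dummy rule is installed'' is thus not the real obstacle; the obstacle is that no dummy rule at the leaf can undo output already produced higher up.

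The paper sidesteps this by \emph{not} modifying the transducers. Instead it carries the DFA state inside the $\hdtol$ alphabet: every letter becomes a pair $\langle r, b\rangle$, each homomorphism $h_a$ updates the DFA component in lockstep with the $\mtt$ simulation, and the final homomorphisms $h,g$ map every $\langle r, b\rangle$ with $r \notin R_f$ to $\epsilon$. Both sides of the $\hdtol$ instance then yield $\epsilon$ on inputs outside $L$, exploiting the erasing power of the final homomorphism rather than any preprocessing of the $\mtt$s. Your construction is easily repaired along these lines, and the remainder of your argument then goes through unchanged.
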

\begin{proof}
We first solve the problem without a given input tree language. 
Let $M_1=(Q_1,\Gamma,\Pi,q_0,R_1)$ and $M_2=(Q_2,\Gamma,\Pi,p_0,R_2)$
be total monadic normalized macro tree transducers such 
that $Q_1$ is disjoint from $Q_2$. Let $Q=Q_1\cup Q_2$. 
We define an instance of the $\hdtol$ sequence equivalence problem.
The string alphabets $\Sigma, \Delta$ are defined as $\Sigma=\Pi^{(1)}\cup Q$
and $\Delta=\Pi^{(1)}$.
We define homomorphisms $h_a, g_a$ for every input symbol $a\in\Gamma^{(1)}$.
For $\pi\in\Pi^{(1)}$ let $h_a(\pi)=g_a(\pi)=\pi$.
Let $q\in Q$. 
If $q\in Q_1$ then let $h_a(q)=\strip(\rhs_{M_1}(q,a))$,
and otherwise let $h_a(q)=q$. 
If $q\in Q_2$ then let $g_a(q)=\strip(\rhs_{M_2}(q,a))$,
and otherwise let $g_a(q)=q$. 
For trees $t\in T_{\Pi\cup Q}(X_k\cup Y_m)$ we define the mapping $\strip$ 
by $\strip(\pi(t))=\pi\cdot\strip(t)$ for $\pi\in\Gamma^{(1)}$,
$\strip(q(x_1,t))=q\cdot\strip(t)$ for $q\in Q^{(2)}$, 
$\strip(q(x_1))=q$ for $q\in Q^{(1)}$, and 
$\strip(\bot)=\strip(y_1)=\epsilon$, where ``$\cdot$''
denotes string concatenation.
The final homomorphisms $h, g$ are defined as $h(q)=\strip(\rhs_{M_1}(q,\bot))$
if $q\in Q_1$, and otherwise $h(q)=q$, and
$g(q)=\strip(\rhs_{M_2}(q,\bot))$ if $q\in Q_2$, and otherwise $g(q)=q$.
Last but not least, let $w_1=q_0$ and $w_2=p_0$. This ends the construction of the $\hdtol$ instance.
Consider an input tree $s=a_1(\cdots a_n(\bot)\cdots)\in T_\Gamma$.
It should be clear that 
$h(h_{a_n}(\cdots h_{a_1}(w_1)\cdots))=\strip(M_1(s))$ and that
$g(g_{a_n}(\cdots g_{a_1}(w_2)\cdots))=\strip(M_2(s))$.
Thus, this instance of the $\hdtol$ sequence equivalence problem 
solves the equivalence problem of the two transducers $M_1$ and $M_2$.

Let $D\subseteq T_\Gamma$ be a regular input tree language.
We wish to decide whether $M_1(s)=M_2(s)$ for every $s\in D$.
We assume that $D$ is given by a deterministic finite-state automaton $A$ that runs top-down on the
unary symbols in $\Gamma^{(1)}$. We further assume that 
$A=(R,\Gamma^{(1)},r_0,\delta,R_f)$ is complete, i.e., 
for every state $r\in R$ and every symbol $a\in\Gamma^{(1)}$,
$\delta(r,a)$ is defined (and in $R$). Note that $r_0$ is the initial state and $R_f\subseteq R$ is
the set of final states. 
Let $\Sigma=\Pi^{(1)}\cup Q$ as before and define
$\Sigma'=\{\langle r,b\rangle\mid r\in R, b\in\Sigma\}$ and $\Delta=\Pi^{(1)}$.
Our $\hdtol$ instance is over $\Sigma'$ and $\Delta$.
Let $a\in\Gamma^{(1)}$, $r\in R$, and $r'=\delta(r,a)$.
For $\pi\in\Pi^{(1)}$ let 
$h_a(\langle r,\pi\rangle)=g_a(\langle r,\pi\rangle)=\langle r',\pi\rangle$.
Let $q\in Q_1$ and $p\in Q_2$.
Define
\[
\begin{array}{lcl}
h_a(\langle r,q\rangle )&=&\strip(\rhs_{M_1}(q,a))[b\leftarrow\langle r',b\rangle\mid b\in\Sigma]\\
g_a(\langle r,p\rangle )&=&\strip(\rhs_{M_2}(p,a))[b\leftarrow\langle r',b\rangle\mid b\in\Sigma].
\end{array}
\]
Let $h_a(\langle r,p\rangle)=\langle r,p\rangle$ and 
$g_a(\langle r,q\rangle)=\langle r,q\rangle$.
The final homomorphisms $g, h$ are defined as follows.
If $r\in R_f$ then let
$h(\langle r,q\rangle)=\strip(\rhs_{M_1}(q,\bot))$,
$g(\langle r,p\rangle)=\strip(\rhs_{M_2}(p,\bot))$, and 
let $h(\langle r,b\rangle)=b$ and  $g(\langle r,b\rangle)=b$
for the remaining cases.
If $r\not\in R_f$ then let
$h(\langle r,b\rangle) = g(\langle r,b\rangle) = \epsilon$
for every $b\in\Sigma$.
The initial strings are defined as $w_1=\langle r_0,q_0\rangle$ and
$w_2=\langle r_0,p_0\rangle$.

Consider an input tree $s=a_1(\cdots a_n(\bot)\cdots)\in T_\Gamma$
and let $1\leq j\leq n$.
It should be clear that if $\delta^*(r_0,a_1\cdots a_j)=r$, i.e.,
$A$ arrives in state $r$ after reading the prefix $a_1\cdots a_j$,
then 
\[
h_{a_j}(\cdots h_{a_1}(w_1)\cdots ) = 
\strip(M_1(a_1\cdots a_j(x)))[\pi\leftarrow\langle r,\pi\rangle\mid\pi\in\Delta]
[q\leftarrow\langle r,q\rangle\mid q\in Q_1]
\]
and similarly for $g$ and $M_2$. Thus each and every symbol of a sentential form
is labeled by the current state of the automaton $A$.
Hence, if $s\not\in D$, then every symbol in 
$u_1=h_{a_n}(\cdots h_{a_1}(w_1)\cdots)$ and in 
$u_2=g_{a_n}(\cdots g_{a_1}(w_2)\cdots)$ is labeled by some state $r\not\in R_f$.
This implies that $h(u_1) = g(u_2) =\epsilon$, i.e., the final strings are
equal whenever $s\not\in D$. If on the contrary $s\in D$ then 
every symbol in $u_i$ is labeled by a final state and therefore
$h(u_i)=\strip(M_i(s))$ as before.
\end{proof}

Input and output symbols of rank zero of a given transducer
become symbols of rank one in the corresponding normalized transducer.
For a monadic tree $t=a_1(\cdots a_n(e)\cdots)$ we denote by
$\expand(t)$ the tree $a_1(\cdots a_n(e(\bot))\cdots)$.

\begin{lemma}\label{lm:expand}
For every monadic $\mttr$ transducer $M$ a normalized $\mttr$ transducer $N$ can
be constructed such that $\tau_N=\{(\expand(s),\expand(t))\mid (s,t)\in\tau_M\}$.
\end{lemma}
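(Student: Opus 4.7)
The plan is to satisfy the three normalization conditions (N0), (N1), (N2) in turn, and in that order, since each step is easier once the previous ones are in place. For (N0), I would apply the standard construction sketched just above Lemma~\ref{lm:inv}: for every state $q$ of rank $m+1$ and every subset $A\subseteq Y_m$, Lemma~\ref{lm:inv} gives effective regularity of $M_q^{-1}(T_\Delta(A))$, so these sets can be folded into the look-ahead automaton and used to split each $q$ into variants $q_A$ that receive only the parameters they actually use. The result is an equivalent nondeleting $\mttr$, still monadic, and still with regular look-ahead.

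Next, after trimming the transducer so that every remaining state is reached in some successful derivation of an input in $\dom{M}$ (again supported by Lemma~\ref{lm:inv}), condition (N1) becomes automatic: a useful state $q$ of rank $m+1$ is eventually evaluated at some input leaf $e\in\Sigma^{(0)}$, and the corresponding rule has right-hand side $t\in T_\Delta(Y_m)$. Since $\Delta$ is monadic the tree $t$ is a single chain with one leaf, and nondeletingness forces every $y_j$ to sit at that unique leaf, so $m\le 1$.

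For (N2) I would build $N$ by reinterpreting the alphabets as $\Sigma_N=\Sigma^{(1)}\cup\{e^{(1)}\mid e\in\Sigma^{(0)}\}\cup\{\bot^{(0)}\}$ and analogously for $\Delta_N$, keeping the state set unchanged. Let $\phi$ denote the substitution on trees that replaces each former output leaf $c\in\Delta^{(0)}$ by $c(\bot)$. Each rule $q(a(x_1),\bar y)\to t$ of $M$ with $a\in\Sigma^{(1)}$ becomes $q(a(x_1),\bar y)\to \phi(t)$ in $N$, with unchanged look-ahead annotation; each rule $q(e,\bar y)\to t$ of $M$ with $e\in\Sigma^{(0)}$ becomes $q(e(x_1),\bar y)\to \phi(t)$ in $N$, where $x_1$ is a fresh variable that does not appear in $\phi(t)$, and the look-ahead annotation becomes $\langle p_\bot\rangle$. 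The look-ahead automaton of $N$ extends the one of $M$ by a fresh sink state $p_\bot$ with $\delta'_\bot()=p_\bot$ and $\delta'_e(p_\bot)=\delta_e()$ for each original leaf $e$; a short induction on monadic chains then shows that the look-ahead state on $\expand(s)$ in $N$ equals the look-ahead state on $s$ in $M$, so the rule annotations remain meaningful.

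Correctness would follow by structural induction on $s\in T_\Sigma$: for every state $q$ of rank $m+1$ and every $u_1,\dots,u_m\in T_\Delta$,
\[
N_q(\expand(s),\phi(u_1),\dots,\phi(u_m))=\phi(M_q(s,u_1,\dots,u_m)).
\]
Instantiating at $q_0$ and using that the unique leaf of the monadic tree $\tau_M(s)$ lies in $\Delta^{(0)}$ gives $\tau_N(\expand(s))=\expand(\tau_M(s))$. The main obstacle I would watch out for is the argument for (N1): it crucially needs trimming first, since otherwise a vacuously present rank-$\geq 3$ state (one that is never actually reached on a leaf during a successful derivation) could persist and violate the condition. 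Once trimming is done, the remaining steps are essentially local rewriting of rule right-hand sides together with a consistent rewiring of the look-ahead.
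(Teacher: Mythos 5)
Your proof follows essentially the same route as the paper's: first make the transducer nondeleting via regular look-ahead (from which (N1) follows for useful states, since a nondeleting leaf rule over a monadic output alphabet can host at most one parameter occurrence), then expand the input and output alphabets with a fresh $\bot$ of rank zero and rewire the look-ahead accordingly. Your explicit trimming step and your uniform application of the substitution $c\leftarrow c(\bot)$ to \emph{all} right-hand sides (not only those of the leaf rules, where rank-one states may also emit output leaves) are sensible refinements of details the paper leaves implicit, but they do not change the approach.
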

\begin{proof}
Using regular look-ahead we first make $M$ nondeleting.
As mentioned in the proof of Lemma~\ref{lm:ymtt}, this construction was
given in the proof of Lemma~6.6 of~\cite{DBLP:journals/iandc/EngelfrietM99}.
Now, every parameter that appears in the left-hand side of a rule, also
appears in the right-hand side. Since the final output tree is monadic,
the resulting transducer satisfies (N1) above. 
Finally, we define the $\mttr$ transducer $N$ which has input and output alphabets
$\Sigma'=\Sigma^{(1)}\cup\{a'^{(1)}\mid a\in\Sigma^{(0)}\}$ and
$\Delta'=\Delta^{(1)}\cup\{a'^{(1)}\mid a\in\Delta^{(0)}\}$.
For input symbols in $\Sigma^{(1)}$ the transducer $N$ has exactly
the same rules as $M$. 
Let $q\in Q$ and $a\in\Sigma^{(0)}$ such that $\rhs(q,a)$ is defined.
Then we let 
\[
q(a'(x_1))\to\rhs(q,a)[b\leftarrow b'(\bot)\mid b\in\Delta^{(0)}].
\]
be a rule of $N$. 
Regular look-ahead can be used to ensure that only trees of
the form $\expand(s)$ are in the domain of $N$.
\end{proof}

Obviously, two monadic $\mtt$ transducers are equivalent if and only 
if their normalized versions are equivalent. Hence, it suffices to 
consider the equivalence problem of normalized monadic $\mtt$ transducers.

\begin{theorem}
Equivalence of monadic macro tree transducers with regular look-ahead is decidable.
\end{theorem}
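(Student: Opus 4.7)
The plan is to stitch together the preceding lemmas so as to reduce an arbitrary instance of monadic $\mttr$-equivalence to the case covered by Lemma~\ref{lm:totalmonadic}, namely equivalence of \emph{total} \emph{normalized} monadic $\mtt$s on a \emph{regular} input language. Let $M_1,M_2$ be the given monadic $\mttr$s. First, by Lemma~\ref{lm:inv} (which also holds for $\mttr$s, since look-ahead can be folded into the inverse-image construction) the domains $\dom{\tau_{M_1}}$ and $\dom{\tau_{M_2}}$ are effectively regular; check their equality using decidability of equivalence of regular tree languages, and output ``not equivalent'' if they differ. Otherwise, let $D$ denote the common domain.

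Next, I would apply Lemma~\ref{lm:expand} to each $M_i$ to obtain an equivalent normalized $\mttr$ $N_i$ whose input and output trees are the expansions $\expand(s)$ and $\expand(t)$ of those of $M_i$. Since $\expand$ is a bijection between $T_\Sigma$ (with $\Sigma^{(0)}\neq\emptyset$) and its normalized counterpart, $N_1\equiv N_2$ iff $M_1\equiv M_2$, and the regular domain $D$ corresponds to an effectively regular domain $D'=\{\expand(s)\mid s\in D\}$.

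The next step is to eliminate look-ahead from $N_1,N_2$. On monadic trees this is routine: replace every input symbol $a$ by annotated symbols $\langle a,p\rangle$ where $p$ ranges over the states of the combined look-ahead automaton (the product of the look-ahead automata of $N_1$ and $N_2$), and translate each rule with look-ahead $\langle p_1\rangle$ into a rule reading $\langle a,p_1\rangle$, propagating the child look-ahead state as in Section~\ref{sect:td}. The annotated trees that correspond to real runs of the look-ahead automata form an effectively regular language $L$; intersecting $L$ with (the annotated version of) $D'$ yields an effectively regular input language $D''$. On $D''$ the two look-ahead-free normalized monadic $\mtt$s agree with $N_1,N_2$ (up to the relabeling). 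Finally, totalize each transducer by adding, for every missing $(q,\sigma)$-rule, a rule with right-hand side $q(x_1)$ or $\bot$ (any fixed choice that respects the rank constraints of a normalized $\mtt$); this does not affect behaviour on $D''$, because on $D''$ every originally defined computation succeeds. Applying Lemma~\ref{lm:totalmonadic} to the resulting total normalized monadic $\mtt$s on the regular language $D''$ decides their equivalence, hence that of $M_1$ and $M_2$.

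The only real obstacle is ensuring that the successive transformations (expand-normalization, look-ahead elimination, totalization) compose coherently on the restricted regular input language, so that equivalence on $D''$ of the final total transducers is equivalent to equivalence of the original $M_1,M_2$ on $D$. Everything else is a direct appeal to Lemmas~\ref{lm:inv}, \ref{lm:expand}, and~\ref{lm:totalmonadic}.
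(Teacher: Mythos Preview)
Your proposal is correct and follows essentially the same route as the paper: normalize via Lemma~\ref{lm:expand}, check domain equality via Lemma~\ref{lm:inv}, eliminate look-ahead by annotating input symbols with the product of the two look-ahead automata and restricting to the regular language of correctly annotated trees intersected with the (expanded) domain, totalize arbitrarily, and then invoke Lemma~\ref{lm:totalmonadic}. The only cosmetic differences are the order in which you perform the domain check and the normalization, and your phrasing of the annotated alphabet as pairs $\langle a,p\rangle$ with $p$ a product state rather than triples $\langle \sigma,p_1,p_2\rangle$; both are equivalent.
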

\begin{proof}
Let $M_1,M_2$ be monadic macro tree transducers with regular look-ahead
and let $\Sigma$ be their input alphabet.
Let $A_1, A_2$ be the look-ahead automata of $M_1, M_2$. 
By Lemma~\ref{lm:expand} we may assume that $M_1$ and $M_2$ are normalized.
We first check if the domains of $M_1$ and $M_2$ coincide. 
If not then the transducers are not equivalent and we are finished.
Otherwise, let $D$ be their domain. 
We define two total monadic $\mtt$s $N_1, N_2$ \emph{without} look-ahead. 
Let $P_1,P_2$ be the sets of states of $A_1,A_2$, respectively.
The input alphabet of $N_i$ is defined as
$\Sigma'=\{\langle\sigma,p_1,p_2\rangle\mid\sigma\in\Sigma^{(1)},p_1\in P_1, p_2\in P_2\}$. 
An input symbol $\langle\sigma, p_1,p_2\rangle$ denotes that the look-ahead automata
at the child of the current node are in states $p_1$ and $p_2$, respectively.
Thus, the $(q,\langle\sigma,p_1,p_2\rangle)$-rule of $N_1$ is defined as 
the $(q,\sigma)$-rule with look-ahead $\langle p_1\rangle$ of $M_1$,
and the $(q,\langle\sigma,p_1,p_2\rangle)$-rule of $N_2$ is defined as 
the $(q,\sigma)$-rule with look-ahead $\langle p_2\rangle$ of $M_2$.
Finally, we make $N_1$ and $N_2$ total (in some arbitrary way).

For a tree $t$ in $T_{\Sigma'}$ we denote by $\gamma(t)$ the tree in $T_\Sigma$ obtained
by changing every label $\langle \sigma,p,p'\rangle$ into the label $\sigma$. 
Let $E\subseteq T_{\Sigma'}$ be the regular tree language consisting of all trees
$t$ such that 
\begin{enumerate}
\item[(1)] $s=\gamma(t)$ is in $D$, 
\item[(2)] the second components of the labels in $t$ constitute a correct run of $A_1$ on $s$, and
\item[(3)] the  third components of the labels in $t$ constitute a correct run of $A_2$ on $s$.
\end{enumerate}
Clearly, for the resulting transducers $N_i$ it holds 
that $N_1$ and $N_2$ are equivalent on $E$ if and only if 
$M_1$ is equivalent to $M_2$. 
Hence decidability of equivalence follows from Lemma~\ref{lm:totalmonadic}.
\end{proof}

Note that macro tree transducers with monadic output alphabet are 
essentially the same as top-down tree-to-string transducers (see Lemma~7.6
of~\cite{DBLP:journals/iandc/EngelfrietM99}).
For the latter, the equivalence problem was stated already in 1980
by Engelfriet~\cite{eng80} as a big open problem.
This problem remains open, but, as this section has shown, at least
for the restricted case of monadic input, we obtain decidability.
Note further that the connection between L-systems and tree transducers
is well known and was studied extensively in~\cite{DBLP:journals/jcss/EngelfrietRS80}.

\section{Complexity}

In Section~\ref{sect:equiv} we already mentioned one
complexity result, viz. Theorem~\ref{theo:complex}, which states 
that equivalence can be decided in
polynomial time for total  top-down tree transducers.
How about top-down tree transducers ($\ttt$s) in general?
It was mentioned in the Conclusions of~\cite{DBLP:conf/mfcs/BenediktEM13}
that checking equivalence of $\ttt$s can be done in double exponential time,
using the procedure of~\cite{DBLP:journals/jcss/EngelfrietMS09}.

Without giving details we now present a proof that strengthens
both results above (and which also works for transducers with look-ahead).
We show that equivalence for $\ttt$s can be decided in $\expspace$,
and for total $\ttt$s in $\nlogspace$.
For a top-down tree transducer $M$ and trees $s,t$ with $t=M(s)$,
it holds that each node $v$ in the output tree $t$ is produced by one
particular node $u$ in $s$. The latter is called $v$'s origin. 
It means that $M(s[u\leftarrow x])$ does not have a $\Delta$-node $v$,
while $v$ is a $\Delta$ node in $M[s\leftarrow a(x,\dots,x)]$ where $a=s[u]$.

\begin{theorem}\rm\label{theo:j}
Equivalence of top-down tree transducers with
regular look-ahead is decidable in $\expspace$, and for
total transducers in $\nlogspace$.
\end{theorem}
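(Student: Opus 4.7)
The plan is to avoid the doubly exponential blowup of the product automaton in the proof of Theorem~\ref{theo:Esik} by using origins to locate a single discrepancy witness $v$ and verifying it via nondeterministic reachability over a small state space. Each output node of $M_i(s)$ has a \emph{context} --- a triple $(q_i,\sigma_i,\pi_i)$ where $q_i$ is the state currently processing at the origin $u_i$, $\sigma_i=s[u_i]$, and $\pi_i$ is the position of the output node within $\rhs(q_i,\sigma_i)$. This triple evolves deterministically as one steps from $v$ to a child in the output, and each component admits a logarithmic-space representation: $q_i$ and $\sigma_i$ are symbols from finite sets of polynomial size, and $\pi_i$ can be stored by a numerical identifier of a node in the fixed rhs tree.

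For the \nlogspace upper bound on total top-down tree transducers, I would design the following nondeterministic procedure. Nondeterministically guess an output path from the root of the output towards a candidate discrepancy $v$. Initialize the pair of contexts by guessing a root input symbol and resolving any initial state-call prefix of $q_{0,1}$ (resp.\ $q_{0,2}$) until a $\Delta$-node is reached in each rhs. Then repeatedly guess an output child index $c$ and update each $M_i$'s context accordingly: either advance $\pi_i$ to $\pi_i.c$ inside the current rhs, or, if $\rhs(q_i,\sigma_i)[\pi_i]$ is a state call, consume it by guessing a fresh input symbol at the new input child and resolving the resulting state-call chain to the next $\Delta$-node. At some nondeterministically chosen step, accept if and only if the two current $\Delta$-labels differ. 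Totality removes any need to verify domain membership separately, and Lemma~\ref{lm:bd} (with $d=1$) bounds $c$ polynomially, so that bookkeeping never exceeds logarithmic size; the polynomial bound on $(q_1,\pi_1,q_2,\pi_2)$-pairs ensures that witness paths can be kept polynomial by shortcutting repeated contexts, placing the procedure in \nlogspace.

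For the \expspace upper bound on the general case, the same algorithm applies with three adjustments. First, each context is augmented by the state of the regular look-ahead automaton at the corresponding input position, ensuring the guessed input subtree lies in the transducer's domain. Second, domain equality of $M_1$ and $M_2$ is verified in parallel using the effective construction of Lemma~\ref{lm:inv}. Third, the chain of state-call resolutions between two successive $\Delta$-nodes can now have length up to the singly exponential bound $c=2^{|Q_1|+|Q_2|}\cdot h$ from Lemma~\ref{lm:bd}, and the context correspondingly needs exponential space. The resulting procedure runs in \textsc{NExpspace}, which equals \expspace by Savitch's theorem.

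The main obstacle will be ensuring \emph{consistency} of the input-symbol guesses across the two parallel walks. The origin sequences $u_1^{(t)},u_2^{(t)}$ of $M_1$ and $M_2$ need not coincide at every step: one walk may descend faster through state-call chains than the other, and in general the two origin sequences may even leave a common ancestor $w$ via different children of $w$. Where the walks consult the same input node they must use the same symbol, and where they have separated they may guess independently. I would handle this by tracking, alongside each context, a ``lag'' marker indicating which of $u_1,u_2$ is currently the ancestor together with a bounded queue of already-committed symbols on the path from the upper to the lower origin, invoking the bounded-balance guarantee of Lemma~\ref{lm:bd} to control the queue length (polynomial for total transducers, exponential in general). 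Getting this bookkeeping to fit in the desired space while preserving soundness is the delicate step; everything else is routine once it is in place.
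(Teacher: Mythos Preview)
Your high-level plan coincides with the paper's: use complementation, guess a single output node $v$ at which $M_1$ and $M_2$ disagree, track the two origins, and appeal to the height-balance bound of Lemma~\ref{lm:bd} for the space analysis. The gap is in how you organise the walk and what you store. You step along the \emph{output} path to $v$ and propose to keep a queue of the input symbols committed on the path between the two current origins $u_1,u_2$, claiming that Lemma~\ref{lm:bd} keeps this queue short. But Lemma~\ref{lm:bd} bounds the height-balance of the \emph{outputs} $M_i(s[u\leftarrow x])$; it says nothing about the distance between $u_1$ and $u_2$ in the input tree. That distance is in fact unbounded even for total equivalent transducers, because of rules whose right-hand side is a single state call. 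Take $M_1$ with rules $q_0(a(x_1))\to f(q(x_1))$, $q(a(x_1))\to q(x_1)$, $q(e)\to e$, $q_0(e)\to e$, and $M_2$ with $p_0(a(x_1))\to f(e)$, $p_0(e)\to e$. Both map $a^n(e)$ to $f(e)$, yet for the output leaf the origin under $M_1$ is the input leaf while under $M_2$ it is the input root, so your queue has length $n$. Shortcutting on repeated context pairs $(q_1,\pi_1,q_2,\pi_2)$ does not help: the committed input symbols along the lag must still be retained in case the lagging transducer later descends into that segment, and they are not recoverable from the context pair alone.

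The paper takes the dual walk: it steps along the \emph{input} path from the root of $s$ toward the origins $u_1,u_2$, so that both transducers are always synchronised on the input side, and stores instead the difference between the two \emph{output} paths on the way to $v$. This output difference is exactly what Lemma~\ref{lm:bd} controls: plugging a smallest completing subtree at the current input node shows that the extra output of the faster transducer lies inside a tree of height at most $h$ in the total case (hence the difference is a path inside one right-hand side, storable by a pointer in logarithmic space) and at most $h\cdot 2^{|Q_1|+|Q_2|}$ in general (exponential space). Reorganising your procedure to walk the input rather than the output would close the gap.
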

\begin{proof}
We sketch the proof for transducers without look-ahead.
Since both complexity classes are closed under complement,
it suffices to consider nonequivalence.
Consider two top-down tree transducers $M_1$ and $M_2$. 
The idea (as in the finite-copying case)
is to guess (part of) an input tree $s$ and a node $v$ of the output trees
$t_1=M_1(s)$ and $t_2=M_2(s)$ such that $t_1[v]\neq t_2[v]$. 
It suffices to guess
the two \emph{origins} of $v$ with respect to $M_1$ and $M_2$: nodes $u_1$ and $u_2$ 
of $s$, respectively.
More precisely, it suffices to guess the paths from the root of $s$
to $u_1$ and $u_2$, and the path from the root of $t_1$ and $t_2$ to $v$,
where we may assume that all proper ancestors of $v$ have the same label
in $t_1$ and $t_2$. When guessing the path from the root of $s$ to the least
common ancestor of $u_1$ and $u_2$, the path in $t_1$ can be ahead of the path
in $t_2$, or vice versa, so the difference between these paths must be
stored. But it suffices to keep the length of this difference to be
at most exponential in the sizes of $M_1$ and $M_2$, due to the bounded
height-balance of $M_1$ and $M_2$ in case they are equivalent. 
In the proof of Lemma~\ref{lm:bd}
the height of the smallest tree $s'$ is at most exponential, and hence
the height of its translation is at most exponential. Hence the
difference between the paths in $t_1$ and $t_2$ can be stored in exponential
space. 

If $M_1$ and $M_2$ are total, then the difference between the paths in $t_1$
and $t_2$ is at most a path in a right-hand side of a rule, which can be kept in
logarithmic space. Logarithmic space is also needed to do all the
guesses, of course.
The same proof as above also holds for transducers with regular look-ahead.
\end{proof}

\begin{theorem}\label{theo:hard}
Equivalence of top-down tree transducers is $\dexptime$-hard.
\end{theorem}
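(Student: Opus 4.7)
The plan is to reduce from the $\dexptime$-complete acceptance problem for alternating polynomial-space Turing machines (ATMs); by the Chandra--Kozen--Stockmeyer theorem, $\mathsf{APSPACE}=\dexptime$. Given such an ATM $A$ and an input word $w$, I aim to construct in polynomial time two top-down tree transducers $M_1$ and $M_2$ satisfying $M_1\equiv M_2$ iff $A$ rejects $w$. Since $\dexptime$ is closed under complement, this yields the desired $\dexptime$-hardness.

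Input trees will encode alleged accepting computation trees of $A$ on $w$. Since configurations have polynomial width $p(|w|)$, each configuration is laid out as a monadic path of tape cells in the input tree; the tree branches binarily at universal configurations and unarily at existential ones. Parent and child configurations are interleaved cell-by-cell along the path so that the validity of each transition reduces to a constant-width sliding window check, in the spirit of Cook's tableau. Consequently the set $V$ of input trees encoding valid accepting computations of $A$ on $w$ is a regular tree language recognized by a deterministic bottom-up tree automaton $B$ whose size is polynomial in $|A|$ and $|w|$.

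Let $M_1$ be the trivial transducer that outputs a fixed ground tree $c$ on every input, and let $M_2$ be a top-down transducer with regular look-ahead $B$ whose single root-level rule outputs $c$ when $B$ lands in a non-accepting state and outputs a distinct ground tree $c'$ when $B$ lands in an accepting state (both transducers have domain $T_{\Sigma_0}$ for the encoding alphabet $\Sigma_0$). Then $M_1\equiv M_2$ iff $V=\emptyset$ iff $A$ does not accept $w$. To obtain the statement for plain top-down transducers (without look-ahead), apply the polynomial-time conversion described just before Theorem~\ref{theo:dtr}: since input trees have bounded rank (branching factor at most two), the alphabet blow-up $|P|^k|Q|^k$ remains polynomial in $|B|$.

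The main obstacle is producing $B$ in polynomial size: a naive encoding that stores an entire configuration inside an automaton state would blow up exponentially. The interleaved cell-by-cell encoding is exactly what circumvents this, reducing each transition consistency check to a constant-width local comparison. A secondary technical point is the handling of universal configurations, where both child subtrees must be validated; this is taken care of by the product structure inside $B$ at branch points, and could alternatively be handled using the copying feature if one translates the look-ahead back into an explicit top-down transducer. The whole construction is routine given the encoding, so the only real work is in designing the interleaving so that $B$ stays polynomial.
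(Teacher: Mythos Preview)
Your proposal has a fatal gap: the polynomial-size bottom-up tree automaton $B$ that you claim recognizes the set $V$ of valid accepting computation trees cannot exist. Emptiness of a tree automaton is decidable in polynomial time, so if such a $B$ could be built in polynomial time from $(A,w)$, then testing $L(B)=\emptyset$ would decide acceptance of the alternating polynomial-space machine in polynomial time, giving $\dexptime=\Ptime$ and contradicting the time hierarchy theorem.

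Concretely, the interleaving trick you describe only verifies that \emph{within} a single monadic segment the transition $C\vdash C'$ is locally consistent. It does not verify that the copy of $C'$ ending that segment agrees with the copies of $C'$ beginning the segments towards the children of $C'$. Enforcing that agreement across a branch point forces the automaton to remember the entire configuration $C'$, a string of length $p(|w|)$, which blows the state space up exponentially. The ``product structure inside $B$ at branch points'' does not help: the state arriving from each subtree is still of constant size and cannot carry the child configuration up to its parent.

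The paper's proof sidesteps this by reducing from intersection emptiness of $n$ deterministic top-down tree automata, the standard $\dexptime$-complete problem here. The crucial point is that the constructed transducer $M_1$ uses the \emph{copying} power of top-down transducers: it sends all $n$ automaton states $q_1,\dots,q_n$ into the same subtree via the rule $q_0(\sigma(x_1,\dots))\to \delta(q_1(x_1),\dots,q_n(x_1))$, so that a single transducer simulates the product of the $n$ automata without ever constructing it. In effect, the paper exploits copying where you tried to use look-ahead; look-ahead alone is only as powerful as a single tree automaton, and as argued above that is provably insufficient for a polynomial-time reduction.
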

\begin{proof}
It is well known that testing intersection emptiness of $n$ deterministic
top-down tree automata $A_1,\dots,A_n$ is $\dexptime$-complete. This was shown 
by Seidl~\cite{DBLP:journals/ipl/Seidl94}, cf. also~\cite{tata07}.
Let $\Sigma$ be the ranked alphabet of the $A_i$. 
We define the top-down tree transducer $M_1=(\{q_0,\dots, q_n\},\Sigma,\Sigma\cup\{\delta^{(n)}\},q_0,R)$.
We consider each $A_i$ as a partial identity transducer with start state $q_i$,
and add the corresponding rules to $R$. Thus, $M_{1,q_i}=\{(s,s)\mid s\in L(A_i)\}$.
Let $\sigma\in\Sigma$ be an arbitrary symbol of rank $\geq 1$,
and let $e$ be an arbitrary symbol in $\Sigma^{(0)}$.
We add these two rules to $R$:
\[
\begin{array}{lcl}
q_0(\sigma(x_1,\dots))&\to&\delta(q_1(x_1), q_2(x_1), \dots, q_n(x_1))\\
q_0(e)&\to&e
\end{array}
\]
The transducer $M_2=(\{p\},\Sigma,\Sigma,p,\{p(e)\to e\})$ realizes the
translation $\tau_{M_2} = \{(e,e)\}$.
If the intersection of the $L(A_i)$ is empty, then there is no tree $s\in T_\Sigma$ such
that $M_{q_i}(s)$ is defined for all $i\in\{1,\dots, n\}$, i.e., 
the first rule displayed above is never applicable. 
Hence, in this case also $\tau_{M_1}=\{(e,e)\}$, i.e., the transducers $M_1,M_2$
are equivalent. If the intersection is non-empty, then there is an input tree $s$ such that
$M(s)=\delta(s,s,\dots, s)$. 
Thus, $M_1$ is equivalent to $M_2$ if and only if the intersection of the $L(A_i)$ is empty.
\end{proof}

\subsection{Streaming Tree Transducers}

The (deterministic) streaming tree transducers of Alur and d'Antoni are a new model
with the same expressive power as deterministic $\mso$ tree translations which in turn
realize the same translations as deterministic macro tree translations of linear 
size increase. The idea of the model is to use a finite set of variables which hold
partial outputs. These variables are updated during a single depth-first left-to-right 
traversal of the input tree. It is stated in Theorem~20 of~\cite{DBLP:journals/corr/abs-1104-2599}
that equivalence of streaming tree transducers can be decided in exponential time.
The idea of the proof is the same as the one in Theorem~\ref{theo:mttlsi}:
construct a context-free language $L^{a,b}$ and use its Parikhness to
check if $a^nb^n$ is in the language. For them, $L^{a,b}$ is represented by a 
pushdown automaton $A$, the number of states of which is exponential in the number
of variables of the given streaming tree transducer.
They mention that checking if $a^nb^n$ is in $L(A)$
can be done in $\nptime$ using~\cite{DBLP:journals/fuin/Esparza97,DBLP:conf/icalp/SeidlSMH04}.

\begin{theorem}[\cite{DBLP:journals/corr/abs-1104-2599}]
Equivalence of streaming tree transducers is decidable in $\conexptime$.
\end{theorem}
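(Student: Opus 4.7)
The plan is to lift the argument of Theorem~\ref{theo:mttlsi} from macro tree transducers to streaming tree transducers (STTs), while tracking the extra resource cost imposed by the variable set. Given STTs $M_1,M_2$ with common regular domain $D$, I first append a fresh end-marker $\$$ to both output languages and, for each pair of distinct output letters $a,b$, I define
\[
L^{a,b}=\{a^m\#b^n\mid\exists s\in D: M_1(s)/m=a,\; M_2(s)/n=b\}.
\]
As in Theorem~\ref{theo:mttlsi}, $M_1$ and $M_2$ are \emph{not} equivalent iff there exist such $a,b$ with $L^{a,b}\cap\{a^n\#b^n\mid n\in\nat\}\neq\emptyset$.

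Next, I would construct a nondeterministic pushdown automaton $A^{a,b}$ recognizing $L^{a,b}$. The PDA simulates the joint depth-first left-to-right traversal that both STTs perform on a guessed input tree $s\in D$, using its stack to match the opening and closing moves on the nested input. In its finite control it keeps the current states of $M_1$ and $M_2$, the product of their (regular) look-ahead, and a constant-size \emph{summary tag} for every STT variable, recording whether the chosen output position currently lies inside that variable and, if so, on which side of the selected letter new material is being appended. The nondeterministic choices commit the positions $m$ (for $M_1$) and $n$ (for $M_2$) by firing $a$- and $b$-productions at the instant the selected letter is first emitted; all other output is discarded. Because the tags per variable are constant in size but must be tracked independently across the variable set, $A^{a,b}$ has a number of states exponential in the number of variables of the STTs, and polynomial in the remaining parameters.

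To finish, I would decide whether $L(A^{a,b})\cap\{a^n\#b^n\mid n\in\nat\}$ is nonempty. Intersecting $A^{a,b}$ with the regular language $a^*\#b^*$ preserves the pushdown structure and its size; it then remains to check whether the Parikh image of the resulting context-free language meets the linear set $\{(n,1,n)\mid n\in\nat\}$. By the results of~\cite{DBLP:journals/fuin/Esparza97,DBLP:conf/icalp/SeidlSMH04} cited in the excerpt, this Parikh membership question on a pushdown automaton of size $N$ is solvable in $\nptime(N)$. Applied to our exponential-size $A^{a,b}$, this yields a nondeterministic procedure running in time exponential in the inputs, i.e.\ a $\textsc{NExptime}$ algorithm for non-equivalence, and therefore a $\conexptime$ algorithm for equivalence.

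The main obstacle is justifying the pushdown construction: one must argue that a constant-size summary per variable is really enough, even though STT variables carry unbounded output fragments that are concatenated, nested and substituted throughout the run. The key observation is that after the position has been guessed to lie inside one variable, each subsequent variable update either keeps the guessed letter inside the same variable, moves it into another variable (which is tracked by updating the tag), or discards it (killing the nondeterministic branch); so once the PDA's stack records the structural context of the input tree, a bounded tag per variable suffices to certify the final positional match. Everything else — the common-domain check, the appending of $\$$, and the reduction to intersection-with-$\{a^n\#b^n\}$ — is routine and carries over verbatim from Theorem~\ref{theo:mttlsi}.
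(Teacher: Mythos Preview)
Your proposal is correct and follows essentially the same approach the paper describes (citing~\cite{DBLP:journals/corr/abs-1104-2599}): build the language $L^{a,b}$ as in Theorem~\ref{theo:mttlsi}, represent it by a pushdown automaton whose state set is exponential in the number of STT variables, and then invoke the $\nptime$ Parikh/emptiness check of~\cite{DBLP:journals/fuin/Esparza97,DBLP:conf/icalp/SeidlSMH04} to obtain a $\textsc{NExptime}$ test for non-equivalence, hence $\conexptime$ for equivalence. Your identification of the main technical hurdle---that a bounded tag per variable suffices to track the guessed output position through the STT's variable updates---is exactly the point that needs care in the PDA construction.
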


For the transducers that map strings to nested strings, that is, for streaming
string-to-tree transducers their construction yields a $\pspace$ bound
(Theorem~21 of~\cite{DBLP:journals/corr/abs-1104-2599}).

\begin{theorem}[\cite{DBLP:journals/corr/abs-1104-2599}]
Equivalence of streaming string-to-tree transducers is decidable in $\pspace$.
\end{theorem}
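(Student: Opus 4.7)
The plan is to follow the same Parikh-based blueprint used in the proof of Theorem~\ref{theo:mttlsi} and in the preceding $\conexptime$ bound for general streaming tree transducers, but to exploit the fact that the input is now a string rather than a tree. This will let us replace the exponential pushdown automaton appearing in the tree-input case by a polynomial-sized one, and then invoke a $\pspace$ decision procedure on it.

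Concretely, given two streaming string-to-tree transducers $T_1,T_2$, I would first compare their (regular) domains; if these differ, nonequivalence is trivial. Otherwise, viewing the output trees as yield strings with auxiliary terminal symbols marking tree structure, equivalence is the conjunction, over all pairs of distinct output letters $a\ne b$, of the statement that no input string $s$ and position $n$ witness $T_1(s)/n=a$ and $T_2(s)/n=b$. As in Lemma~\ref{lm:ab}, I would encode the existence of such a witness as the nonemptiness of an intersection $L^{a,b}\cap\{a^n\# b^n\mid n\geq 0\}$ for some $a\ne b$.

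Second, I would construct a pushdown automaton $A_{a,b}$ generating $L^{a,b}$. Reading the input string left to right, $A_{a,b}$ tracks the states of both transducers and guesses and verifies the positions of the target $a$ in $T_1(s)$ and $b$ in $T_2(s)$; the stack is used to descend into the nested tree-shaped contents of variables when locating positions inside a variable's stored output fragment. Crucially, because the input is now linear rather than branching, $A_{a,b}$ has only polynomially many states and a polynomial stack alphabet in $|T_1|+|T_2|$, as opposed to the exponential blowup incurred in the tree-input construction of the previous theorem.

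Finally, I would decide whether $L(A_{a,b})$ contains any string with matching $a$- and $b$-counts. Since context-free languages are Parikh, this amounts to an existential linear arithmetic query over the semilinear Parikh image of a polynomial-sized PDA, which by the results of~\cite{DBLP:journals/fuin/Esparza97,DBLP:conf/icalp/SeidlSMH04} can be answered in $\pspace$; since $\pspace$ is closed under complement, this yields a $\pspace$ algorithm for equivalence itself. The main technical obstacle is ensuring the polynomial sizing of $A_{a,b}$: variable updates that copy or concatenate output fragments must be simulated without materializing exponentially many distinct variable configurations in the finite control, and the stack discipline must faithfully count output positions across nested variable contents. Once this combinatorial care is taken, the complexity reduction from $\conexptime$ to $\pspace$ follows directly from the reduced state count of the PDA relative to the tree-input case.
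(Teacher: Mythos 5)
The paper does not actually prove this theorem; it only cites Theorem~21 of \cite{DBLP:journals/corr/abs-1104-2599} and remarks that the same $L^{a,b}$-construction used for tree inputs ``yields a $\pspace$ bound'' in the string-to-tree case. Your overall blueprint (domain check, the languages $L^{a,b}$, intersection with $\{a^n\# b^n\}$, Parikh/semilinear reasoning) matches that construction. The genuine gap is precisely the step you flag as ``the main technical obstacle'': the claim that $A_{a,b}$ has only polynomially many states because the input is linear. The exponential blowup in the tree-input case is, as the paper states, exponential \emph{in the number of variables} of the transducer, not in anything caused by input branching. It arises because the finite control must record, for every variable, summary information about the two guessed output positions --- which variable's stored fragment will eventually contain each guessed position, and how the letter counts accumulated inside each variable get assembled under the copyless concatenations and substitutions performed by the updates. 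This bookkeeping is a function on the variable set and persists unchanged when the input is a string, so the automaton for $L^{a,b}$ remains exponential in the number of variables. Your own accounting also betrays the problem: if the automaton really were polynomial, the Parikh-image query would be in $\nptime$ by \cite{DBLP:journals/fuin/Esparza97,DBLP:conf/icalp/SeidlSMH04}, and you would have proved a co-$\nptime$ upper bound, strictly better than the $\pspace$ statement you set out to establish --- a sign that the premise is too strong.

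The way the $\pspace$ bound is actually obtained is different: one accepts that the automaton has exponentially many states, but observes that each state can be written down in polynomially many bits and its outgoing transitions computed on the fly; for string input no stack is needed, and emptiness of $L^{a,b}\cap\{a^n\# b^n\}$ reduces to a reachability search in this exponential-size transition graph augmented with a counter whose relevant values can be bounded (by pumping) by the number of states, hence also stored in polynomial space. This yields an NPSPACE, hence $\pspace$, procedure. So the skeleton of your argument is the right one, but the crucial quantitative claim is false, and repairing the proof requires replacing ``small automaton plus $\nptime$ Parikh test'' by ``exponential but succinctly navigable automaton plus on-the-fly polynomial-space search.''
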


\subsection{Visibly Pushdown Transducers}

Visibly pushdown languages were defined by Alur and Madhusudan~\cite{DBLP:conf/stoc/AlurM04} as
a particular subclass of the context-free languages. In fact, they are
just regular tree languages in disguise.
Visibly pushdown transducers were introduced by 
Raskin and Servais~\cite{DBLP:conf/icalp/RaskinS08}.
They translate well-nested input strings into strings, during one left-to-right
traversal of the input. 
If the output strings are nested as well, then they describe tree
transformations. The expressive power of the resulting tree transformations
is investigated by Caralp, Filiot, Reynier, Servais, and 
Talbot~\cite{DBLP:journals/corr/CaralpFRST13}. 
Such transducers cannot copy nor swap the order of input trees. Thus, they
are $\mso$ definable. But they are incomparable to the top-down or bottom-up
tree translations, because they can translate a tree into its yield (string
of leaf labels from left to right).

\begin{theorem}[\cite{DBLP:conf/mfcs/FiliotRRST10}]
Equivalence of functional visibly pushdown transducers
is $\dexptime$-complete. For total such transducers the problem is
in {\sc Ptime}.
\end{theorem}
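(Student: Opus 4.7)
The plan is to exploit the single-valuedness assumption via a bounded-delay argument analogous to the bounded height-balance of Lemma~\ref{lm:bd}. The key lemma I would prove is: if two functional visibly pushdown transducers $T_1$ and $T_2$ are equivalent, then on every well-matched input prefix the current output strings $u_1$ and $u_2$ agree on their common prefix and their delay $\max(|u_1|,|u_2|) - \min(|u_1|,|u_2|)$ is at most exponential in $|T_1|+|T_2|$. The bound is obtained by a pumping argument over matched call/return loops in the product of the two transducers: whenever the delay grows beyond the exponential threshold, one can pump a loop in a way that forces two inputs to map to outputs that cannot both equal the same target string, contradicting functionality.

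Next I would reduce equivalence to emptiness of a product visibly pushdown automaton $P$ that simulates $T_1$ and $T_2$ in parallel on a shared well-nested input. The crucial point is that visibility synchronizes the two pushdowns: a call pushes on both stacks and a return pops from both, so a single stack over pairs suffices. In its finite state, $P$ stores the current delay word (the unmatched suffix of one output over the other), which has bounded exponential length. The automaton accepts as soon as it observes a mismatched output letter, a freshly produced letter on the shorter side that disagrees with the buffered delay, or an imbalance at the end of input. Since emptiness of visibly pushdown automata is polynomial in the VPA size, and $P$ has exponentially many states, this gives a $\dexptime$ procedure. Domain equality $\dom{T_1}=\dom{T_2}$ can be checked beforehand in polynomial time via equivalence of the underlying visibly pushdown languages, which are regular tree languages in disguise.

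For total functional transducers the delay bound collapses to polynomial: the delay can only change by at most the maximal right-hand side length per transition, and a polynomial bound suffices to carry through the same pumping argument because no ``blocking'' due to partiality can artificially inflate the difference. The product $P$ then has polynomially many states, so emptiness is decidable in \ptime. For the $\dexptime$ lower bound I would adapt the reduction of Theorem~\ref{theo:hard}: encode $n$ deterministic top-down tree automata $A_1,\dots,A_n$ as partial-identity visibly pushdown transducers over a well-nested linearization of input trees, and define $T_1$ to output a distinguished tag when all $A_i$ accept and $T_2$ to output a fixed constant; then $T_1\equiv T_2$ iff $\bigcap_i L(A_i)=\emptyset$.

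The hard part will be the exponential delay bound itself. Unlike the top-down case, one must argue about pumping matched call/return pairs simultaneously, so the pushdown contents constrain which loops are usable. The analysis must show that once the delay exceeds a threshold depending on the number of state-stack-symbol pairs, a synchronized pump on the product yields two computations whose outputs cannot be reconciled by any single functional relation, thereby providing the contradiction. Everything else — the product construction, the VPA emptiness call, and the hardness reduction — is then mostly bookkeeping.
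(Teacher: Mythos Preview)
The survey does not prove this theorem; it merely quotes it from Filiot, Raskin, Reynier, Servais, and Talbot and moves on. So there is no in-paper proof to compare against. Your upper-bound sketch --- a bounded-delay lemma for single-valued VPTs, followed by emptiness of an exponential-size product visibly pushdown automaton that stores the delay word in its finite control --- is in fact the approach of the cited paper, so on that half you are on target.

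Two genuine issues remain. First, your hardness argument does not transfer from Theorem~\ref{theo:hard}. That reduction relies on the top-down transducer \emph{copying} the input subtree $x_1$ into $n$ positions and running a different $A_i$ on each copy; a visibly pushdown transducer reads its input once, left to right, and cannot copy. A single polynomial-size functional VPT whose domain (or ``accepting output'') is $\bigcap_i L(A_i)$ would already require the product of the $A_i$, i.e., the exponential blow-up you are trying to encode. The $\dexptime$ lower bound for functional VPT equivalence is obtained differently, via universality/equivalence of nondeterministic visibly pushdown automata (equivalently nondeterministic tree automata): take $T_1$ to be the identity restricted to $L(A)$ for a nondeterministic VPA $A$, and $T_2$ the identity on all well-nested words; then $T_1\equiv T_2$ iff $A$ is universal.

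Second, two smaller slips. Domain equality of the $T_i$ is equivalence of (in general nondeterministic) visibly pushdown languages, which is itself $\dexptime$, not polynomial; this is harmless for the overall upper bound but contradicts what you wrote. And your justification of the $\Ptime$ bound in the total case (``no blocking due to partiality can inflate the difference'') is not yet a proof: you still have to carry out the pumping over matched call/return pairs and show concretely that the delay is bounded by a fixed polynomial in $|T_1|+|T_2|$, so that the product automaton has polynomial size.
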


The $\dexptime$-completeness result extends to the case
of regular look-ahead, as shown in Section~8.4 of~\cite{DBLP:conf/sofsem/FiliotS12,Servais11}.
Staworko, Laurence, Lemay, and Niehren~\cite{DBLP:conf/fct/StaworkoLLN09} have considered the
equivalence problem for deterministic visibly pushdown transducers and show
that it can be reduced in $\Ptime$ to the homomorphic equivalence problem
on context-free grammars. The latter was shown by 
Plandowski~\cite{DBLP:journals/jcss/KarhumakiPR95,DBLP:conf/esa/Plandowski94}
to be solvable in $\Ptime$.
They show in~\cite{DBLP:conf/fct/StaworkoLLN09}
that for several related classes the problem is in $\Ptime$,
for instance, linear and order-preserving deterministic top-down and
bottom-up tree transducers. 

\begin{theorem}[\cite{DBLP:conf/fct/StaworkoLLN09}]
Equivalence of deterministic visibly pushdown transducers is
decidable in {\sc Ptime}.
\end{theorem}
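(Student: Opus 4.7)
The plan is to follow the reduction of Staworko, Laurence, Lemay, and Niehren to Plandowski's polynomial-time algorithm for homomorphic equivalence on context-free grammars. Given two deterministic visibly pushdown transducers $M_1,M_2$ over a common visibly pushdown input alphabet, I would first check that their domains coincide. Since the domain of a DVPT is a visibly pushdown language, hence effectively recognizable by a deterministic pushdown automaton whose stack moves are driven by the call/return structure of the input, this equivalence test reduces to equivalence of deterministic visibly pushdown languages, which is in \Ptime. If the domains differ the transducers are inequivalent, so assume a common domain $D$.

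Next, the plan is to construct a single context-free grammar $G$ generating a set of ``computation witnesses'' in bijection with inputs $s\in D$, together with two morphisms $h_1,h_2$ from the terminals of $G$ into the output alphabet such that $h_i(w_s)=M_i(s)$ for the witness $w_s$ associated with $s$. The crucial observation is that, because both transducers are visibly and deterministic, their stack behavior is entirely determined by the call/return structure of the input, and at each input position each $M_i$ is in a unique (state, stack-top) configuration. Hence one can define a CFG with nonterminals indexed by matched pairs of state-and-stack configurations, one for each transducer, and with productions that emit as terminals the output fragments appended by $M_1$ and $M_2$ on the matching call, internal, and return moves. The grammar has polynomially many nonterminals and productions, and can be built in \Ptime; by construction $M_1\equiv M_2$ on $D$ if and only if $h_1(w)=h_2(w)$ for every $w\in L(G)$.

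The final step is to invoke Plandowski's theorem~\cite{DBLP:conf/esa/Plandowski94,DBLP:journals/jcss/KarhumakiPR95}: homomorphic equivalence on a context-free language is decidable in \Ptime. I expect the main obstacle to be engineering the bijection between derivations of $G$ and inputs $s\in D$ so that the morphisms actually recover $M_i(s)$ exactly, including the correct interleaving of outputs produced on calls, on internal actions, and on the matching returns that may occur much later. Handling this synchronization cleanly is where both determinism and the visibly condition are essential: determinism guarantees uniqueness of the witness $w_s$, and the visibly condition guarantees that the two transducers push and pop in lockstep with the input, so the grammar's nonterminals can track both configurations simultaneously without blowup. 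Once this reduction is in place, polynomial-time decidability of equivalence follows immediately from Plandowski's result.
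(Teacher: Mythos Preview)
Your proposal is correct and follows exactly the approach the paper describes: the paper does not give its own proof but cites \cite{DBLP:conf/fct/StaworkoLLN09}, noting that the result is obtained by a $\Ptime$ reduction to the homomorphic equivalence problem on context-free grammars, which is then solved in $\Ptime$ by Plandowski's theorem~\cite{DBLP:conf/esa/Plandowski94,DBLP:journals/jcss/KarhumakiPR95}. Your sketch of the reduction (product construction tracking both transducers' configurations, terminals carrying the output fragments, two morphisms projecting to each transducer's output) is precisely the intended one.
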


\section{Conclusion}

We discussed the decidability of equivalence for three incomparable
subclasses of deterministic macro tree transducers:
top-down tree transducers, linear size increase $\mtt$s, and
monadic $\mtt$s.
For top-down tree transducers the proof either uses its bounded height-balance
property and constructs an automaton that keeps track of the balance.
Alternatively, such transducers may be transformed into their canonical normal form 
and then be checked for isomorphism. For these decision procedures it is not
``harmful'' that a top-down tree transducer can copy a lot and be
of exponential size increase, because the multiple copies of equivalent
transducers must be well-nested into each other (cf. Figure~\ref{fig:joost}).
This nesting property is not present for $\mtt$s, and in particular the 
bounded height-balance does not hold for $\mtt$s, even not for monadic ones.
Thus, other techniques are needed in these two cases.
For the linear size increase subclass of $\mtt$s we may use the Parikh
property of the corresponding output languages: the two transducers are
merged (``twinned'') to output $a^mb^n$ if, on the same input, one transducer
produces at position $m$ of its output the letter $a$ while the other
transducer produces at position $n$ the letter $b$. Since this output language
is Parikh, we may decide if it contains $a^nb^n$ which implies that the transducers
are not equivalent (because $a\not= b$).
For monadic $\mtt$s we use yet another technique:
we simulate the transducers by $\hdtol$ sequences. Since the sequence equivalence
problem for $\hdtol$ systems is decidable (not detailed here), the result follows.
It remains a deep open problem whether or not equivalence is decidable for
arbitrary deterministic macro tree transducers.
Even for $\mtt$s with monadic output, which are the same as deterministic
top-down tree-to-string transducers, it is open whether or not equivalence
is decidable. Note that the availability of a canonical normal form is a much
stronger result than the decidability of equivalence: for instance, equivalence
is easily decided for top-down transducers with look-ahead, but, for such
transducers we only know a canonical normal form in the 
total case for a fixed look-ahead automaton~\cite{DBLP:journals/corr/EngelfrietMS13}. 
In fact, even to decide
whether or not a given $\tr$ is equivalent to a $\ttt$ is a difficult
open problem; 
it was solved recently for a subclass of $\tr$s~\cite{DBLP:journals/corr/EngelfrietMS13}.

\bibliographystyle{eptcs}
\bibliography{bib} 
\end{document}